\newcommand{\green}[1]{\textcolor{green}{#1}}
\newcommand{\arrowednumber}[2]{%
  \ifnum#2 >= 0
    #1\textsuperscript{$\uparrow$}%
  \else
    #1\textsuperscript{$\downarrow$}%
  \fi
}
\newtheorem{theorem}{Theorem}[section]
\begin{document}
%
\title{Can Encrypted Images Still Train Neural Networks? Investigating Image Information and Random Vortex Transformation}

\author{Xiao-Kai~Cao,~
 Wen-Jin~Mo,~
 Chang-Dong~Wang,~\IEEEmembership{Senior Member,~IEEE,}
  Jian-Huang~Lai,~\IEEEmembership{Senior Member,~IEEE,}\\
 Qiong~Huang,~\IEEEmembership{Member,~IEEE}
\thanks{The work was supported by National Key Research and Development Program of China (2021YFF1201200) and NSFC (62276277).}
\thanks{Xiao-Kai Cao, Wen-Jin~Mo, Chang-Dong Wang and Jian-Huang Lai are with School of Computer Science and Engineering, Sun Yat-sen University, Guangzhou, China, and also with Key Laboratory of Machine Intelligence and Advanced Computing, Ministry of Education, China.
  E-mail: caoxk@mail2.sysu.edu.cn, mowj27@mail2.sysu.edu.cn, changdongwang@hotmail.com, stsljh@mail.sysu.edu.cn.}
  \thanks{Qiong Huang is with College of Mathematics and Informatics, South China Agricultural University, Guangzhou, China.
  E-mail: qhuang@scau.edu.cn.}
  \thanks{Corresponding author: Chang-Dong Wang.}
  }

\markboth{IEEE Transactions on Image Processing}%
{\MakeLowercase{\textit{Cao et al.}}: }

\maketitle

\begin{abstract}
Vision is one of the essential sources through which humans acquire information. To simulate this biological characteristic, researchers have proposed models such as Convolutional Neural Networks and Vision Transformers to mimic the human visual system's processing of visual tasks like image classification. These models have achieved significant success in experiments. However, there has been limited theoretical analysis of the variation in information content during image processing. This is not an insignificant task, as it plays a crucial guiding role in image feature extraction, image quality assessment, and the design of image encryption algorithms. In this paper, we establish a novel framework for measuring image information content to evaluate the variation in information content during image transformations. Within this framework, we design a nonlinear function to calculate the neighboring information content of pixels at different distances, and then use this information to measure the overall information content of the image. Hence, we define a function to represent the variation in information content during image transformations. Additionally, we utilize this framework to prove the conclusion that swapping the positions of any two pixels reduces the image's information content. Furthermore, based on the aforementioned framework, we propose a novel image encryption algorithm called Random Vortex Transformation. This algorithm encrypts the image using random functions while preserving the neighboring information of the pixels. The encrypted images are difficult for the human eye to distinguish, yet they allow for direct training of the encrypted images using machine learning methods. Experimental verification demonstrates that training on the encrypted dataset using ResNet and Vision Transformers only results in a decrease in accuracy ranging from 0.3\% to 6.5\% compared to the original data, while ensuring the security of the data. Furthermore, there is a positive correlation between the rate of information loss in the images and the rate of accuracy loss, further supporting the validity of the proposed image information content measurement framework. The experimental code is available for download on \url{https://github.com/CaoXiaokai/Random_Vortex_Transformation}.
\end{abstract}

\begin{IEEEkeywords}
Computer vision, Theoretical framework, Image information content, Image encryption, Random vortex transformation.
\end{IEEEkeywords}

%
\IEEEpeerreviewmaketitle

\section{Introduction}\label{sec:introduction}
\IEEEPARstart{T}{he} visual system plays a crucial role in our daily lives, as most individuals perceive the external world and gather a wealth of information through visual perception. In order to simulate and understand the human visual system, researchers have delved into the field of computer vision. Computer vision utilizes computer algorithms and techniques to process and analyze image and video data, extracting information about target objects, scenes, and other features. It aims to mimic human-like perception, understanding, and interpretation of visual data. This field has experienced rapid development over the past two decades. In 1998, LeCun et al. provided a detailed description of the Convolutional Neural Network (CNN) model, known as LeNet-5, for handwritten digit recognition. They introduced concepts such as convolutional layers, pooling layers, and fully connected layers, which have since been widely adopted~\cite{DBLP:journals/pieee/LeCunBBH98}. CNNs are characterized by their ability to extract features from images by considering the relationships between neighboring pixels. With the development of deep neural networks, in 2012, Krizhevsky et al. proposed a deep CNN called AlexNet. They improved the model by incorporating techniques such as the ReLU activation function and Dropout~\cite{DBLP:conf/nips/KrizhevskySH12}. In 2014, Simonyan and Zisserman introduced VGGNet, a deeper CNN architecture that utilized smaller convolutional kernels and increased network depth to enhance the model's expressive power~\cite{DBLP:journals/corr/SimonyanZ14a}. In 2016, He et al. addressed issues like model degradation in deep network structures and proposed the Residual Learning (ResNet) model, which leverages a feedforward low-level feature to deeper layers~\cite{DBLP:conf/cvpr/HeZRS16}. In fact, the aforementioned convolutional neural networks or residual neural networks all extract image features by learning the relationships between neighboring pixels.

In order to better capture the relationships between different positions in an image, Dosovitskiy proposed a CNN-independent Vision Transformer in 2021~\cite{DBLP:conf/iclr/DosovitskiyB0WZ21}. This is a variant of the neural network model based on attention mechanisms known as Transformer~\cite{DBLP:conf/nips/VaswaniSPUJGKP17}. Furthermore, the Transformer has been extensively applied in the field of computer vision. In 2020, Carion et al. introduced the Detection Transformer (DETR) for object detection~\cite{DBLP:conf/eccv/CarionMSUKZ20}. In 2021, Liu et al. proposed a new visual transformer called Swin Transformer based on Shifted Windows, which has been widely used in tasks such as image classification, object detection, and semantic segmentation~\cite{DBLP:conf/iccv/LiuL00W0LG21}. When processing images, the Transformer captures the relationships between different positions through self-attention mechanisms, enabling it to learn global information. This allows the Transformer to exhibit stronger performance than ResNet in many computer vision tasks.

These models extract image features by learning the relationships between pixels and have achieved significant success in experiments. However, there is limited literature that theoretically analyzes the relationships between pixels and studies the information contained in images based on these relationships~\cite{DBLP:journals/pami/Marin-FranchF13}. In fact, there is profound research value in studying and analyzing the information of images based on the relationships between pixels. By researching and analyzing the information of images, theoretical support can be provided for image-related techniques and offer guidance for their development. For instance, it plays a role in evaluating and measuring image quality~\cite{DBLP:journals/tip/LiLLLLX23, DBLP:journals/tip/WangSXYWL23, DBLP:journals/tip/MadhusudanaBWAB23}, quantifying the degree of image distortion~\cite{DBLP:journals/tip/YangYKCGS23, DBLP:journals/tip/YanGW0ZS23}, evaluating the effectiveness of image compression and storage algorithms~\cite{DBLP:journals/tip/SunFZ23}, assisting in feature extraction~\cite{DBLP:journals/tip/LiuZXFYW23}, similarity measurement, performance evaluation of image encryption algorithms, and more, as shown in \figurename~\ref{Researching and analyzing the application scenarios of image information}.
\begin{figure*}[!t]%
  \centering
  \includegraphics[width=0.8\textwidth]{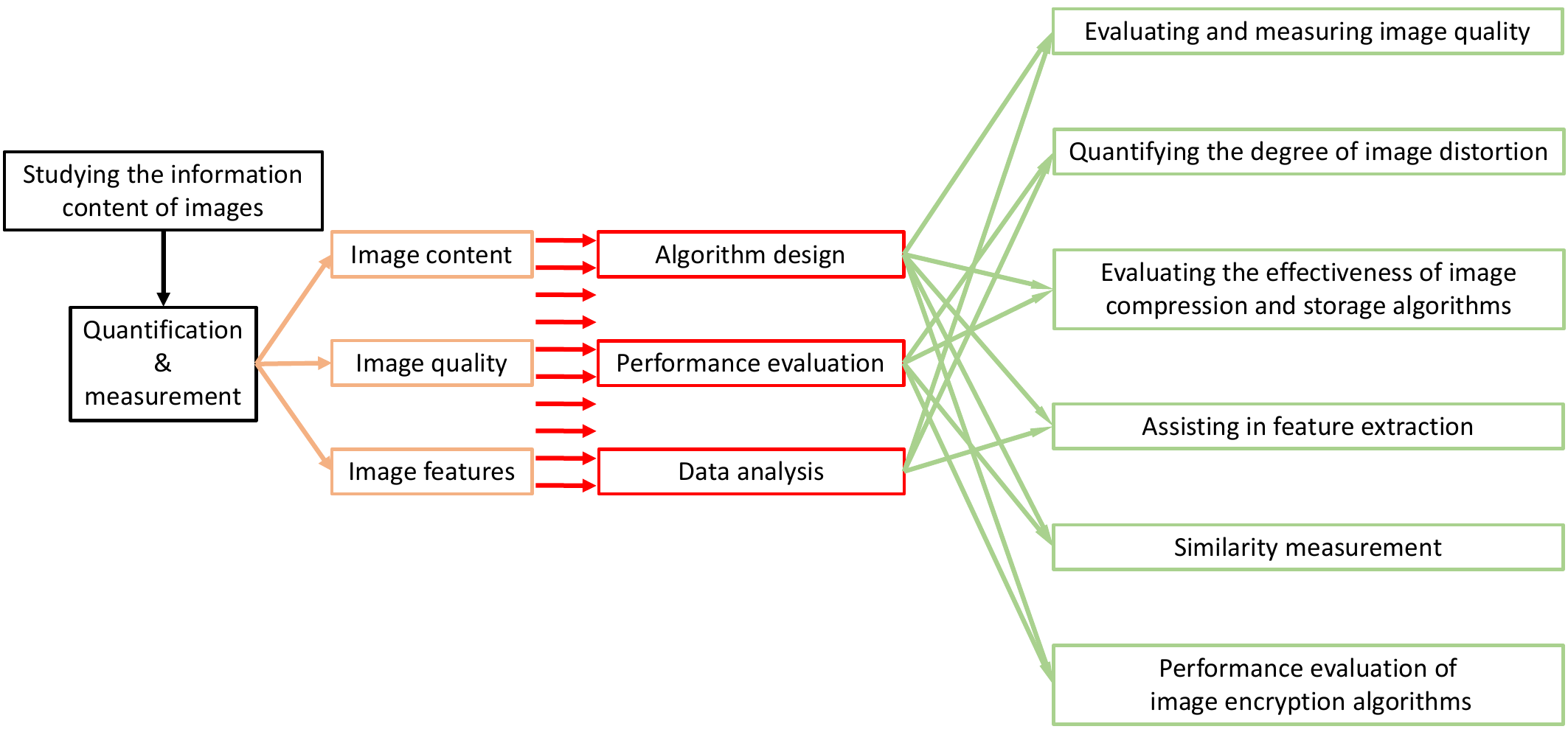}\\
  \caption{Researching and analyzing the application scenarios of image information.}\label{Researching and analyzing the application scenarios of image information}
\end{figure*}

Early image encryption methods often employed symmetric encryption algorithms such as DES and AES, which operated at the pixel level but lacked protection for image structure and semantics~\cite{bruce1996applied, schneier2007applied}. In 1990, Pecora et al. proposed an image encryption method based on chaos theory. They utilized random numbers generated by a chaotic sequence generator for image encryption, thereby providing higher encryption strength and randomness~\cite{pecora1990synchronization}. However, increased randomness could lead to information loss during the encryption and decryption processes. To address this, reversible image encryption algorithms were introduced and widely applied. These algorithms typically involve operations such as pixel position permutation and image diffusion, such as Arnold transform and Discrete Wavelet Transform~\cite{DBLP:conf/mmsec/FridrichGD01}. In recent years, with the advancement of deep learning, more research has emerged exploring the application of deep learning in the field of image encryption. These methods utilize neural network models for image encryption and decryption, offering higher security and resistance against attacks~\cite{DBLP:journals/iotj/DingWCZGCQ21}.

Although these encryption schemes have different principles, most of them serve data transmission purposes. The basic steps involve the data provider encrypting the image using a key, transmitting the encrypted data to the data receiver, and the receiver decrypting the image using a private key. Finally, the data receiver performs operations such as data analysis on the data. However, in reality, there are cases where the data provider does not want the data receiver to obtain the original image as it would leak information about the original image. Is it possible to perform data analysis without decrypting the data? Homomorphic encryption can partially achieve this functionality, but it is only applicable to homomorphic addition and multiplication and is not suitable for commonly used machine learning models.

To address the aforementioned issues, we first define an information content function for images based on the pixel values and the neighboring relationships between pixels. Inspired by CNN and Vision Transformer, we expand the definition of neighboring relationships to include not only adjacent pixels but also the global relationships between each pixel and all other pixels. We design a nonlinear information content function that captures these relationships effectively. Building upon this definition, we conduct a theoretical analysis of the information content changes after image transformations. To the best of our knowledge, there is currently no existing literature that has provided a theoretical analysis of this aspect. Furthermore, to achieve the goal of ``performing data analysis on images without decryption", we propose a technique called Random Vortex Transformation, which is a randomized function-based image encryption method. By applying multiple random vortex transformations to distort the image, we can encrypt the data while preserving the neighboring relationships between pixels, making it difficult for human observers to discern the target information within the image. Moreover, since the random vortex transformation retains the neighboring information of pixels, machine learning methods can be employed for image recognition. To validate the effectiveness of our encryption scheme, we train advanced models such as ResNet and Vision Transformer directly on the encrypted data. The results demonstrate that the accuracy difference between models trained on the encrypted data and the original data is less than $2.5\%$ (for single-channel datasets) and $6.5\%$ (for three-channel datasets). The specific contributions of this paper are as follows:
\begin{itemize}
\item We define an information content function for images to characterize the changes in information content after various transformations are applied to the images.

\item The conclusion that ``exchanging any two pixels will decrease the information content of the image" is theoretically proven.

\item We propose a random vortex transformation for image encryption, which encrypts the data while preserving the neighboring information of pixels, making it difficult for human observers to discern the target information within the image. However, computers can still recognize the content, enabling data analysis under the premise of privacy protection.

\item The effectiveness of the random vortex transformation is verified through experiments. A comparison is made between the random permutation and the random vortex transformation datasets. The results show that random permutation severely disrupts the image information, while the random vortex transformation achieves image encryption with minimal loss of information content.
\end{itemize}

The rest of this paper is organized as follows. In section~\ref{sec: related work}, we provide a brief overview of methods for measuring the information content of images and image encryption techniques. In section~\ref{sec: Information Content of Images}, we define the information content of an image and provide a proof for the ``Principle of Verisimilitude". In section~\ref{sec: Random Vortex Transformation}, we introduce the random vortex transformation proposed in this paper and provide detailed explanations of each component. In section~\ref{sec: Experiments}, we verify the effectiveness of the algorithm on three datasets. Finally, we conclude this paper in section~\ref{sec: Conclusion}.

\section{Related Work}\label{sec: related work}
In this section, we introduce the current mainstream methods for quantifying image information and image encryption techniques.

The information content of an image is typically determined by the pixel values and the neighboring information of adjacent pixels. For images, the neighboring relationships between pixels often contain crucial information regarding the contours of objects within the image. As a result, researchers commonly analyze the neighboring information. The prevailing approach involves calculating the correlation coefficient between adjacent pixels. Specifically, if the correlation coefficient between each pixel and its neighboring pixels in the horizontal, vertical, and diagonal directions tend to approach $1$, it indicates positive correlation among these pixels; otherwise, it signifies negative correlation~\cite{DBLP:journals/tie/ZhangLZ0ZP22, rani2022grayscale}. Correlation coefficients are utilized to measure the degree of correlation between the encrypted image and the original image, thereby evaluating the performance of encryption algorithms. However, such methods only consider the relationships between neighboring pixels, while each object in an image is typically composed of hundreds or even thousands of pixels. We provide a detailed explanation of this in Section \ref{subsec: Definition} of this paper.

Another way to measure the information content of an image is by using the Fr\'{e}chet distance to measure the similarity between two ordered curves~\cite{DBLP:conf/wads/DriemelP21, DBLP:conf/soda/BringmannDNP22, DBLP:conf/soda/ChengH23}. Researchers have considered the contours of objects in an image as curves and used the Fr\'{e}chet distance to precisely learn the similarity between these contours, enabling the identification of objects in the image~\cite{DBLP:conf/icml/ZhangGCDZY23, DBLP:conf/cvpr/Parmar0Z22, DBLP:conf/cvpr/RamtoulaG0M23}. While this method is highly accurate, it is also sensitive to noise. Additionally, although these methods consider factors such as the shape, length, and direction of the curve contours, they overlook the contextual information between curves. For example, they may accurately identify a nose but fail to capture additional information such as the eyes and mouth around the nose.

In the context of image encryption techniques, one of the current mainstream encryption schemes involves utilizing chaotic systems for image encryption~\cite{DBLP:journals/eswa/BhatM22, xu2022new}. Chaotic systems are characterized by the phenomenon that tiny variations in initial conditions can lead to significant changes in system behavior. Exploiting this characteristic of chaotic systems~\cite{DBLP:journals/isci/ZhouWZ23}, pseudo-random sequences can be generated as encryption keys for image encryption. Subsequently, the parameters for encryption techniques such as image diffusion and pixel permutation are generated using the encryption key, and based on these parameters, the image is encrypted. The encrypted data is then transmitted from the data provider to the data receiver, and finally decrypted by the data receiver. One of the objectives of these image encryption methods is to ensure the security of data during transmission. However, does the data receiver obtaining decrypted data satisfy privacy protection requirements? Is it feasible to expect the data receiver to use the data without recovering the original data? Homomorphic encryption does indeed offer similar functionality~\cite{rivest1978data, gentry2009fully}, but the data encrypted using homomorphic encryption is only applicable for homomorphic addition and multiplication operations and cannot be used for complex computations in machine learning. With the advancement of machine learning, is there a method that allows training machine learning models directly on encrypted data? To the best of our knowledge, the current research literature in this particular area is virtually non-existent~\cite{DBLP:conf/mm/ZhangYY22}.

\section{Information Content of Images}\label{sec: Information Content of Images}
In the human eye, an image may contain information about a car, a group of cats, or a well-known celebrity. However, in the field of computer vision, the information content of an image is typically determined by the pixel values of each pixel point and the relationships between neighboring pixel points. The earliest multilayer perceptrons analyze image information based on the pixel values alone. Convolutional neural networks (CNNs) further consider the neighboring relationships between pixel points. In this section, we establish a theoretical framework for analyzing the variation in the information content of images.

\subsection{Definition}\label{subsec: Definition}
In order to accurately quantify the information content of an image, we establish the following requirements:

\begin{itemize}
\item The term ``information content of an image" does not have a strict definition. In this paper, we consider it to be composed of the pixel values and the neighboring relationships between pixel points.

\item The neighboring information of each pixel point depends on its distance from the surrounding pixel points, where closer distances result in greater neighboring information.

\item We focus on the variation in the information content of individual images (vertical comparison) rather than comparing the information content between different images (horizontal comparison).

\item Principle of verisimilitude: After various transformations, an image may make the objects within it more easily recognizable. However, we still regard the original image as having the highest information content because it is the most verisimilar. Regarding the rationale for this principle, we provide a detailed explanation in Section~\ref{Principle of Verisimilitude}.
\end{itemize}

We establish a Cartesian coordinate system, as shown in \figurename~\ref{Cartesian_coordinate}, where the image under study is denoted as $P$. $P$ has dimensions of $n$ columns and $m$ rows, with each pixel point at coordinates $(i,j)$ represented as $P_{ij}$, where $1\leq i\leq n, 1\leq j\leq m$. Let $M (\cdot)$ and $M_P (\cdot)$ denote the information content functions for images and pixel points, respectively, and,
\begin{eqnarray}\label{P and pij}
M(P)=\sum \limits_{i,j} M_P (P_{ij}).
\end{eqnarray}
\begin{figure}[!t]%
  \centering
  \includegraphics[width=0.4\textwidth]{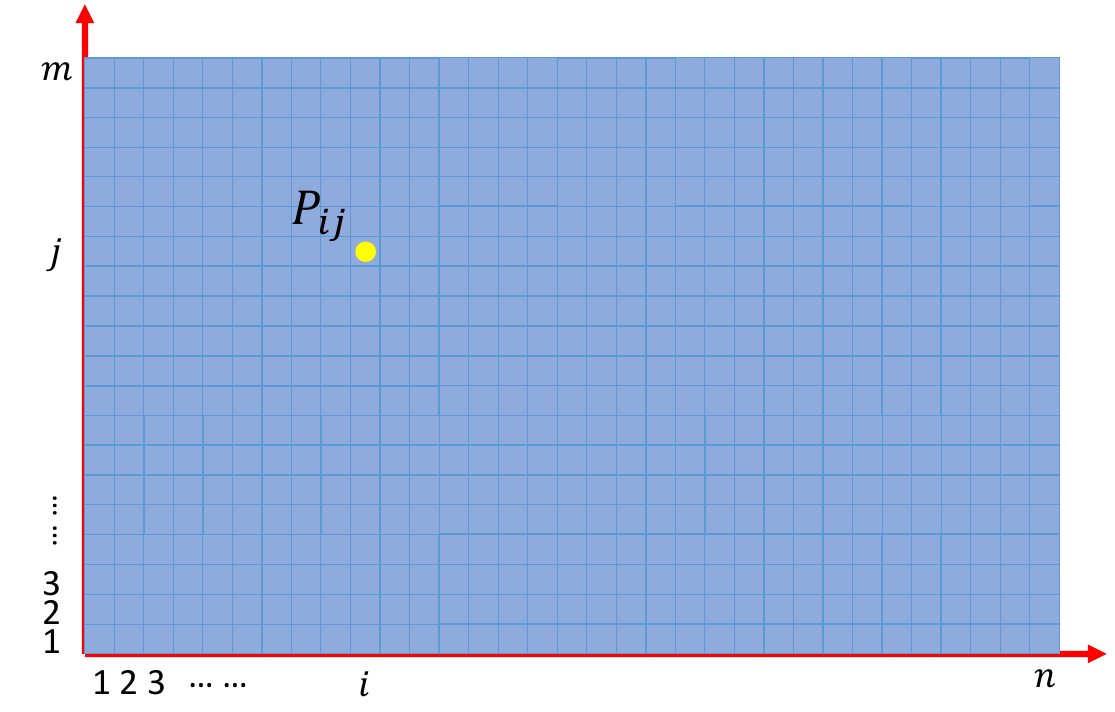}\\
  \caption{Cartesian coordinate system.}\label{Cartesian_coordinate}
\end{figure}

After that, the information content of each pixel point is composed of its pixel value and neighboring information. Let $M_{pix}(\cdot)$ and $M_{neig}(\cdot)$ respectively denote the functions for the pixel value information and neighboring information, i.e.,
\begin{eqnarray}\label{pij and pix neig}
M_P (P_{ij}) = M_{pix} (P_{ij}) + M_{neig} (P_{ij}).
\end{eqnarray}
In this research, we primarily focus on the neighboring information of pixel points, i.e., $M_{neig}(P_{ij})$. Next, we refine this function. Let $d(\cdot,\cdot)$, $\tilde{d}(\cdot,\cdot)$, and $d_{max}$ denote the distance between two pixel points, the modified distance, and the length of the diagonal, respectively. And,
\begin{eqnarray}\label{d d and dmax}
\left\{
\begin{aligned}
&d_{max} = \sqrt{m^2+n^2},\\
&d(P_{ij}, P_{st}) = \sqrt{(i-s)^2+(j-t)^2},\\
&\tilde{d}(P_{ij}, P_{st}) = \frac{d(P_{ij}, P_{st})}{d_{max}}.
\end{aligned}
\right.
\end{eqnarray}

Based on the distance Eq.~\eqref{d d and dmax}, we define $m_{neig}(\cdot,\cdot)$ as the function for measuring the information content between two pixel points. It is expressed as follows:
\begin{eqnarray}\label{m neig}
m_{neig}(P_{ij}, P_{st}) = 1 - \frac{1}{1+e^{6-18\tilde{d} (P_{ij}, P_{st})}}.
\end{eqnarray}

We have designed a Z-shaped information function by leveraging the rapid decay characteristics of the exponential function. This function represents the process of information decay between two pixels as the distance between them increases. The function $m_{neig}(\cdot,\cdot)$ is illustrated in Fig.~\ref{fig_m_neig}. Its intuitive interpretation is that for a given pixel point $P_{ij}$, the closer it is to another pixel point, the higher the information content between them, and this relationship is non-linear. Taking Fig.~\ref{lena_eyes} as an example, let's consider the pixel point at the center of the eye, denoted as $P_{ij}$. The information content between $P_{ij}$ and the pixel points in the surrounding small range (eye region) is significantly high, which is crucial for identifying the eye. On the other hand, the information content between $P_{ij}$ and the pixel points in the facial region is noticeably lower than in the eye region, and the information content between $P_{ij}$ and the pixel points outside the facial region is nearly $0$.

\begin{figure}[!t]%
  \centering
  \includegraphics[width=0.4\textwidth]{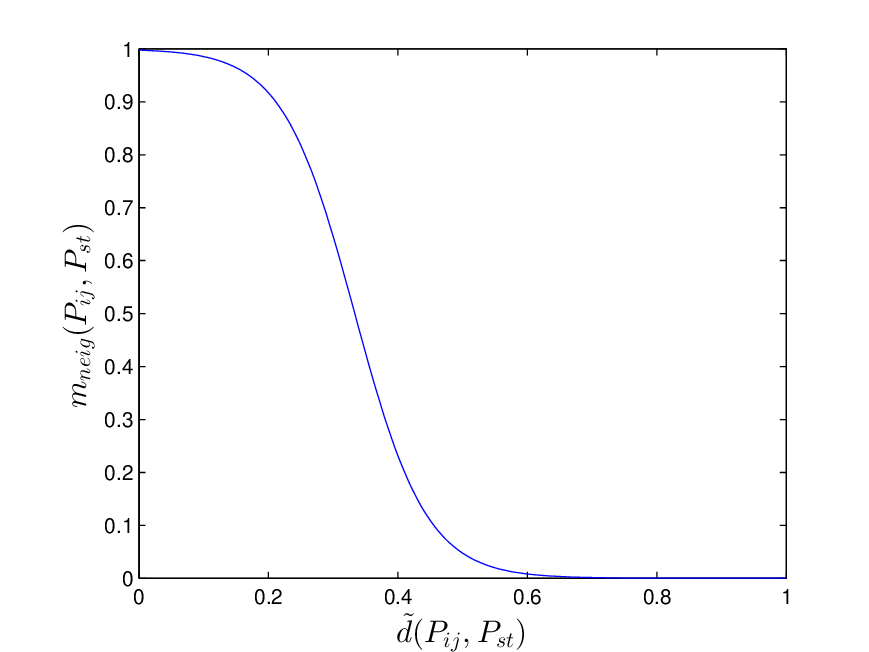}\\
  \caption{The graph of the function $m_{neig}(\cdot,\cdot)$.}\label{fig_m_neig}
\end{figure}

\begin{figure}[!t]%
  \centering
  \includegraphics[width=0.35\textwidth]{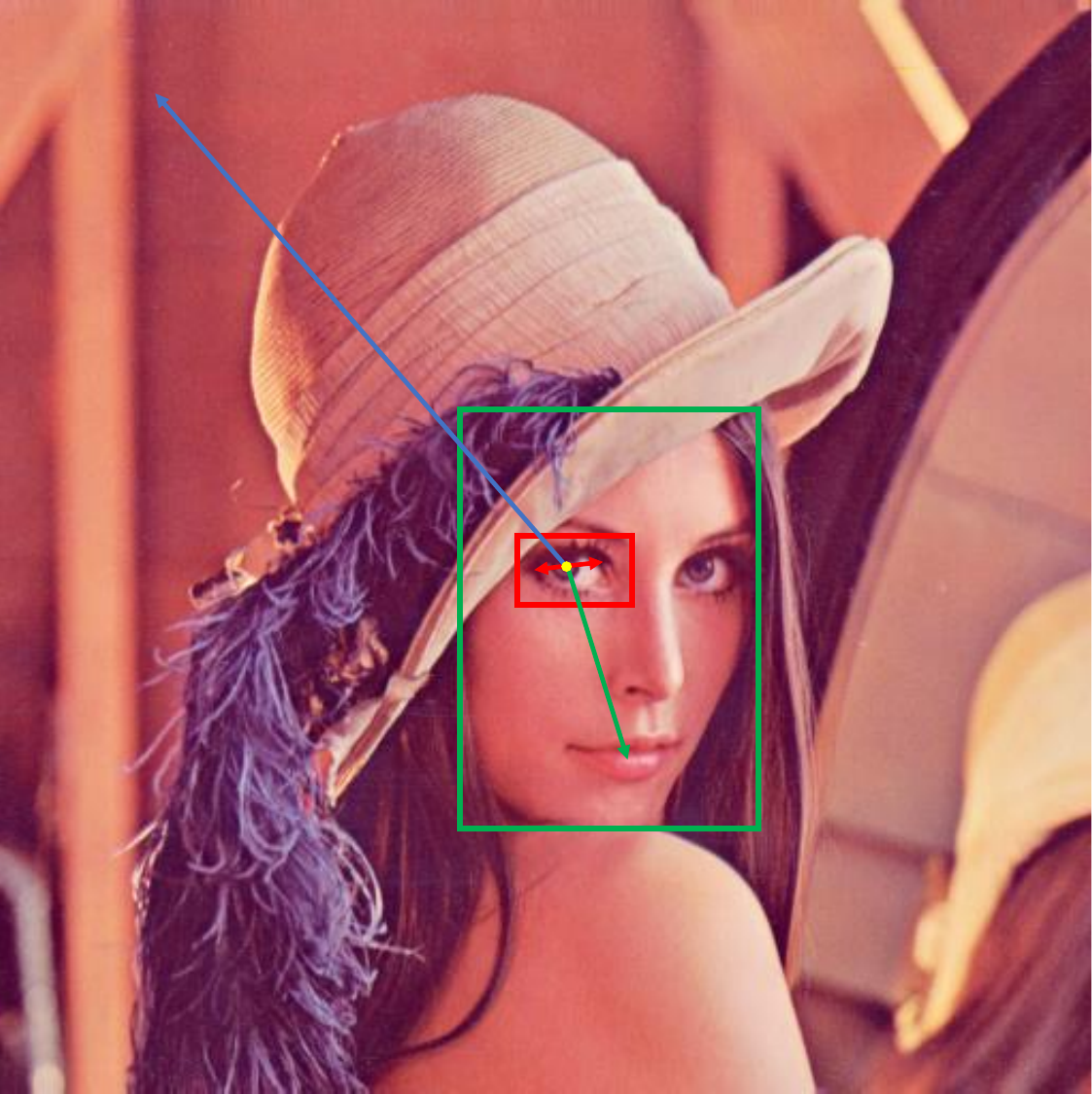}\\
  \caption{The neighboring information of each pixel is not solely determined by the surrounding eight pixels, but rather by the contribution of pixels at varying distances, which provide different amounts of neighboring information. As shown in the figure, in order to recognize the eyes, the red region provides the highest amount of information, while the green region utilizes facial information to provide a less amount of information. On the other hand, the background region (blue arrow) contributes almost negligible information.}\label{lena_eyes}
\end{figure}

Hence, the neighboring information $M_{neig}(P_{ij})$ of the pixel point $P_{ij}$ is equal to the sum of the information content between $P_{ij}$ and all pixel points in the image, i.e.,
\begin{eqnarray}\label{M neig 1}
M_{neig} (P_{ij}) = \sum \limits_{s,t} m_{neig}(P_{ij}, P_{st}).
\end{eqnarray}

Furthermore, when the pixel point positions in image $P$ undergo transformations, let $(i', j')$ represent the transformed position of $(i, j)$, and $m^*_{neig}(\cdot,\cdot)$ denote the information content function after the transformation, and
\begin{eqnarray}\label{m* neig}
&&m^*_{neig}(P_{i'j'}, P_{s't'}) \nonumber\\
&=& [1 - (m_{neig}(P_{ij}, P_{st}) - m_{neig}(P_{i'j'}, P_{s't'}))] \nonumber\\
&&\times m_{neig}(P_{ij}, P_{st}).
\end{eqnarray}
Specifically, when $(i, j) = (i', j')$, the equation $m^*_{neig}(P_{i'j'}, P_{s't'}) = m_{neig}(P_{ij}, P_{st})$ holds. Let the transformed image be denoted as $P^*$, and define the remaining information content function $\Upsilon(\cdot)$ as follows:
\begin{eqnarray}\label{m* neig}
\Upsilon(P^*) = \frac{M(P^*)}{M(P)}.
\end{eqnarray}
Since this framework focuses on studying the change in information content after image transformations, the function $\Upsilon(\cdot)$ can be employed to characterize the proportion of information content variation.

\subsection{Principle of Verisimilitude}\label{Principle of Verisimilitude}
To ensure the validity of the theory proposed in Section~\ref{subsec: Definition}, we also need to prove that the theory satisfies the ``principle of veracity".

\begin{theorem}\label{theorem}
Under the conditions stated in Section~\ref{subsec: Definition}, when swapping the positions of any two points in the image, the total information content of the image either decreases or remains unchanged.
\end{theorem}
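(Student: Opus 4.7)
The plan is to reduce the claim to a single one-line algebraic identity after carefully classifying pairs of pixels. First, I would observe that $\sum_{i,j} M_{pix}(P_{ij})$ depends only on the multiset of pixel values, so permuting two pixel positions leaves this sum unchanged; hence only the neighboring part $\sum_{i,j} M_{neig}(P_{ij}) = \sum_{(i,j),(s,t)} m_{neig}(P_{ij}, P_{st})$ can vary, and it suffices to show that this double sum does not increase under the swap.

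Next, let $A$ and $B$ denote the two swapped pixels, with original positions $p_A$ and $p_B$. I would split the sum over ordered pairs according to how many endpoints lie in $\{A,B\}$. Pairs with neither endpoint in $\{A,B\}$ contribute identically before and after, since those pixels do not move. The two ordered pairs $(A,B)$ and $(B,A)$ are also unchanged, because $d(p_A,p_B) = d(p_B,p_A)$ is preserved by the swap, so the old and new values of $m_{neig}$ agree and the defining formula for $m^*_{neig}$ collapses to $m_{neig}$. The only nontrivial contributions come from ordered pairs with exactly one endpoint in $\{A,B\}$ and the other equal to some third pixel $C$.

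For each such $C$, set $\alpha = m_{neig}$ at the original $A$-to-$C$ distance and $\beta = m_{neig}$ at the original $B$-to-$C$ distance. The four ordered pairs $(A,C), (C,A), (B,C), (C,B)$ contribute $2\alpha + 2\beta$ in the original image. After the swap, the pair $(A,C)$ uses old $m_{neig} = \alpha$ and new $m_{neig} = \beta$, so it contributes $\alpha(1-\alpha+\beta)$, and by the same reasoning the four new contributions sum to $2\alpha(1-\alpha+\beta) + 2\beta(1-\beta+\alpha)$. A short expansion collapses the difference to
\[
\Delta_C \;=\; -2(\alpha - \beta)^2 \;\le\; 0.
\]
Summing over all $C \notin \{A,B\}$ then yields $M(P^*) - M(P) = -2\sum_{C}(\alpha_C - \beta_C)^2 \le 0$, with equality precisely when $p_A$ and $p_B$ are equidistant from every other pixel, a degenerate configuration that forces the swap to be trivial.

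I expect the only real obstacle to be the bookkeeping rather than any hidden analytic difficulty: one must check that the $m^*_{neig}$ formula is applied symmetrically to the ordered pair $(C,A)$ as well as $(A,C)$ (and similarly for $B$), and verify that the $(A,B)$ and $(B,A)$ pairs fall into the trivial branch because their mutual distance is preserved by the swap. Once those two points are pinned down, the remaining step is the single line of algebra displayed above, and the theorem follows.
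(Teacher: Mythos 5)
Your proposal is correct and takes essentially the same route as the paper: both arguments reduce the claim to the double sum of $m_{neig}$ over ordered pairs of pixels and extract the identical quadratic deficit from the definition of $m^*_{neig}$, the paper grouping terms by rows of the sum (first the two swapped pixels, then each remaining pixel, yielding $-(\gamma_1-\gamma_2)^2$ and $-(\gamma_3-\gamma_4)^2$ respectively) while you group the four ordered pairs involving a fixed third pixel $C$ to get $-2(\alpha_C-\beta_C)^2$, which is the same total. Your explicit observation that the $(A,B)$ and $(B,A)$ terms are unchanged because their mutual distance is preserved is in fact slightly more careful than the paper's corresponding display, which formally substitutes the distance to the \emph{old} location of the other swapped pixel for those two boundary terms, though this does not affect the direction of the inequality.
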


\begin{proof}
Let $M (P)$ represent the information content of the image $P$. Consider any two pixel points in the image, denoted as $P_{i_1 j_1}$ and $P_{i_2 j_2}$. We will calculate the sum of the information content of these two points, denoted as:
\begin{align}\label{M neig i1j2+i2j2}
&M_{neig} (P_{i_1 j_1}) + M_{neig} (P_{i_2 j_2}) \\
&= \sum \limits_{s,t} m_{neig}(P_{i_1 j_1}, P_{st}) + \sum \limits_{s,t} m_{neig}(P_{i_2 j_2}, P_{st}) \nonumber\\
&= \sum \limits_{s,t} (m_{neig}(P_{i_1 j_1}, P_{st}) + m_{neig}(P_{i_2 j_2}, P_{st})) \nonumber\\
&= \sum \limits_{s,t} (1 - \frac{1}{1+e^{6-18\tilde{d} (P_{i_1 j_1}, P_{st})}} + 1 - \frac{1}{1+e^{6-18\tilde{d} (P_{i_2 j_2}, P_{st})}})\nonumber\\
&= \sum \limits_{s,t} (2 - \frac{1}{1+e^{6-18\tilde{d} (P_{i_1 j_1}, P_{st})}} - \frac{1}{1+e^{6-18\tilde{d} (P_{i_2 j_2}, P_{st})}}).\nonumber
\end{align}

Next, we will exchange the positions of $P_{i_1 j_1}$ and $P_{i_2 j_2}$. To facilitate distinction, let $P^*_{i j}$ represent the point $P_{i j}$ after the transformation. Specifically, the point $P_{i_1 j_1}$ with coordinates $(i_1, j_1)$ becomes $P^*_{i_1 j_1}$ with coordinates $(i_2, j_2)$. Consequently, the sum of the information content for $P^*_{i_1 j_1}$ and $P^*_{i_2 j_2}$ is as follows:
\begin{eqnarray}\label{M* neig i1j1+i2j2 1}
&&M_{neig} (P^*_{i_1 j_1}) + M_{neig} (P^*_{i_2 j_2}) \nonumber\\
&=& \sum \limits_{s,t} m^*_{neig}(P^*_{i_1 j_1}, P^*_{st}) + \sum \limits_{s,t} m^*_{neig}(P^*_{i_2 j_2}, P^*_{st}) \nonumber\\
&=& \sum \limits_{s,t} [1 - (m_{neig}(P_{i_1 j_1}, P_{st}) - m_{neig}(P_{i_2 j_2}, P_{st}))] \nonumber\\
&&\times m_{neig}(P_{i_1 j_1}, P_{st}) + \sum \limits_{s,t} [1 - (m_{neig}(P_{i_2 j_2}, P_{st}) \nonumber\\
&& - m_{neig}(P_{i_1 j_1}, P_{st}))] \times m_{neig}(P_{i_2 j_2}, P_{st}).
\end{eqnarray}
For the sake of simplicity in the derivation process, let's denote
\begin{eqnarray}\label{gamma 1 2}
\left\{
\begin{aligned}
&\gamma_1 = \frac{1}{1+e^{6-18\tilde{d} (P_{i_1 j_1}, P_{st})}},\\
&\gamma_2 = \frac{1}{1+e^{6-18\tilde{d} (P_{i_2 j_2}, P_{st})}},
\end{aligned}
\right.
\end{eqnarray}
then $0 < \gamma_1, \gamma_2 < 1$, we obtain
\begin{eqnarray}\label{M* neig i1j1+i2j2 2}
&&M_{neig} (P^*_{i_1 j_1}) + M_{neig} (P^*_{i_2 j_2}) \nonumber\\
&=& \sum \limits_{s,t} (1 + \gamma_1 - \gamma_2) \times (1 - \gamma_1) \nonumber\\
&&+ \sum \limits_{s,t} (1 + \gamma_2 - \gamma_1) \times (1 - \gamma_2) \nonumber\\
&=& \sum \limits_{s,t} [(2 - \gamma_1-\gamma_2) - (\gamma_1 - \gamma_2)^2] \nonumber\\
& \leq & \sum \limits_{s,t} [(2 - \gamma_1-\gamma_2) \nonumber\\
&=& M_{neig} (P_{i_1 j_1}) + M_{neig} (P_{i_2 j_2}).
\end{eqnarray}

When $(i,j) \neq {i_1 j_1}$ and $(i,j) \neq {i_2 j_2}$, we obtain
\begin{eqnarray}\label{M neig ij}
&&M_{neig} (P_{i j}) \nonumber\\
&=& \sum \limits_{s,t} m_{neig}(P_{i j}, P_{st}) \nonumber\\
&=& \sum \limits_{s,t} (1 - \frac{1}{1+e^{6-18\tilde{d} (P_{ij}, P_{st})}}), \nonumber\\
\end{eqnarray}
and
\begin{eqnarray}\label{M* neig ij 1}
&&M_{neig} (P^*_{i j}) \nonumber\\
&=& \sum \limits_{s,t} m^*_{neig}(P^*_{i j}, P^*_{s t}) \nonumber\\
&=& \sum \limits_{\substack{(s,t)\neq(i_1,j_1) \\ (s,t)\neq(i_2,j_2)}} m^*_{neig} (P^*_{i j}, P^*_{s t}) \nonumber\\
&& + m^*_{neig}(P^*_{i j}, P^*_{i_1 j_1}) + m^*_{neig}(P^*_{i j}, P^*_{i_2 j_2}) \nonumber\\
&=& \sum \limits_{\substack{(s,t)\neq(i_1,j_1) \\ (s,t)\neq(i_2,j_2)}} m_{neig} (P_{i j}, P_{s t}) \nonumber\\
&& + m^*_{neig}(P_{i j}, P^*_{i_1 j_1}) + m^*_{neig}(P_{i j}, P^*_{i_2 j_2}) \nonumber\\
&=& \sum \limits_{\substack{(s,t)\neq(i_1,j_1) \\ (s,t)\neq(i_2,j_2)}} m_{neig} (P_{i j}, P_{s t}) \nonumber\\
&& + [1 - (m_{neig}(P_{i j}, P_{i_1 j_1}) - m_{neig}(P_{i j}, P_{i_2 j_2}))] \nonumber\\
&& \times m_{neig}(P_{i j}, P_{i_1 j_1}) + m_{neig}(P_{i j}, P_{i_2 j_2}) \nonumber\\
&& \times [1 - (m_{neig}(P_{i j}, P_{i_2 j_2}) - m_{neig}(P_{i j}, P_{i_1 j_1}))].
\end{eqnarray}

For the sake of simplicity in the derivation process, let's denote
\begin{eqnarray}\label{gamma 3 4}
\left\{
\begin{aligned}
&\gamma_3 = \frac{1}{1+e^{6-18\tilde{d} (P_{ij}, P_{i_1 j_1})}},\\
&\gamma_4 = \frac{1}{1+e^{6-18\tilde{d} (P_{ij}, P_{i_2 j_2})}},
\end{aligned}
\right.
\end{eqnarray}
then
\begin{eqnarray}\label{M* neig ij 2}
&&M_{neig} (P^*_{i j}) \nonumber\\
&=& \sum \limits_{\substack{(s,t)\neq(i_1,j_1) \\ (s,t)\neq(i_2,j_2)}} m_{neig} (P_{i j}, P_{s t}) \nonumber\\
&& + [(1 - \gamma_3 + 1 -\gamma_4) - (\gamma_3 - \gamma_4)^2] \nonumber\\
&=& \sum \limits_{(s,t)} m_{neig} (P_{i j}, P_{s t}) - (\gamma_3 - \gamma_4)^2 \nonumber\\
&=& M_{neig} (P_{i j}) - (\gamma_3 - \gamma_4)^2 \nonumber\\
&\leq& M_{neig} (P_{i j}).
\end{eqnarray}

Let $P^*$ represent the image after swapping the two points. Referring to Eq.~\eqref{M neig i1j2+i2j2} to Eq.~\eqref{M* neig ij 2}, we obtain:
\begin{eqnarray}\label{MP* MP}
&& M(P^*) \nonumber\\
&=& \sum \limits_{i,j} M_P (P^*_{ij}) \nonumber\\
&=& \sum \limits_{i,j} (M_{pix} (P^*_{ij}) + M_{neig} (P^*_{ij})) \nonumber\\
&\leq& \sum \limits_{i,j} (M_{pix} (P_{ij}) + M_{neig} (P_{ij})) \nonumber\\
&=& \sum \limits_{i,j} M_P (P_{ij}) \nonumber\\
&=& M(P).
\end{eqnarray}

Therefore, the conclusion is valid.
\end{proof}

\subsection{Application Scenarios}
A comprehensive system for calculating image information content can be applied to various scenarios. Here are some application scenarios for reference:
\begin{itemize}
\item Evaluating and measuring image quality: Numerical calculations can provide a more rigorous assessment of image quality. Machine learning methods can be employed to learn the contrast, saturation, brightness, and sharpness of high-quality images, which can then be applied to adaptive image beautification.

\item Quantifying the degree of image distortion: Assessing the level of distortion in image enhancement, image reconstruction, and image restoration algorithms.

\item Evaluating the effectiveness of image compression and storage algorithms: By evaluating the redundancy and complexity of images, compression algorithms and storage strategies can be selectively chosen.

\item Assisting in feature extraction: As shown in \figurename~\ref{lena_eyes}, the perceived region sizes of different target objects vary from one region to another. By assessing the contribution of each region to the recognition of the target objects, the recognition regions for each target object can be determined, facilitating feature extraction.

\item Similarity measurement: Precise point-to-point similarity measurement is susceptible to minor disturbances, leading to significant errors. Similarity measurement methods based on spatial relationships can better assess the relationships between objects in images, particularly for measuring the similarity of rotated, flipped, or folded images.

\item Performance evaluation of image encryption algorithms: In image encryption, to protect against unauthorized recovery of the image, it is often necessary to evaluate the change in information content of encrypted images, thus assessing the performance of encryption algorithms.
\end{itemize}

In general, studying the information content of images enables the quantification and measurement of image content, quality, and features. This provides a foundation and guidance for algorithm design, performance evaluation, and data analysis in fields such as image processing, computer vision, and image encryption. In this paper, based on the proposed theory of image information content, we have designed an image encryption algorithm called Random Vortex Transform (RVT). RVT is an encryption scheme that relies on the neighborhood relationships of pixels. It fully utilizes the patterns of information content variation during the image transformation process, thereby preserving the invariance of image features throughout the encryption process. It differs from previous encryption paradigms because it allows direct feature extraction and target recognition using computer vision algorithms on the encrypted data without the need for decryption.

\section{Random Vortex Transformation}\label{sec: Random Vortex Transformation}
Theorem~\ref{theorem} not only demonstrates that swapping the positions of pixels decreases the quality of the image (i.e., reduces the information content), but it also unveils an underlying law: the reduction in information content is positively correlated with the distance between the two pixels (as deduced from the ``$\leq$'' in Eq.~\eqref{M* neig i1j1+i2j2 2}, Eq.~\eqref{M* neig ij 2} and Eq.~\eqref{MP* MP}), that is $(\tilde{d} (P_{i_1 j_1}, P_{st}) - \tilde{d} (P_{i_2 j_2}, P_{st}))$. This implies that if the change in distance between these two points (or multiple points) is smaller than a certain threshold, the information content of the image remains nearly unchanged after the position swap. Consequently, machine learning methods can still be used to recognize the target information in the image. On the other hand, swapping the positions of certain pixel points in the image can render the target information unrecognizable to the human eye. Figure~\ref{Objectives_of_encryption_schemes} illustrates the design objectives of this encryption scheme.
\begin{figure}[!t]%
  \centering
  \includegraphics[width=0.5\textwidth]{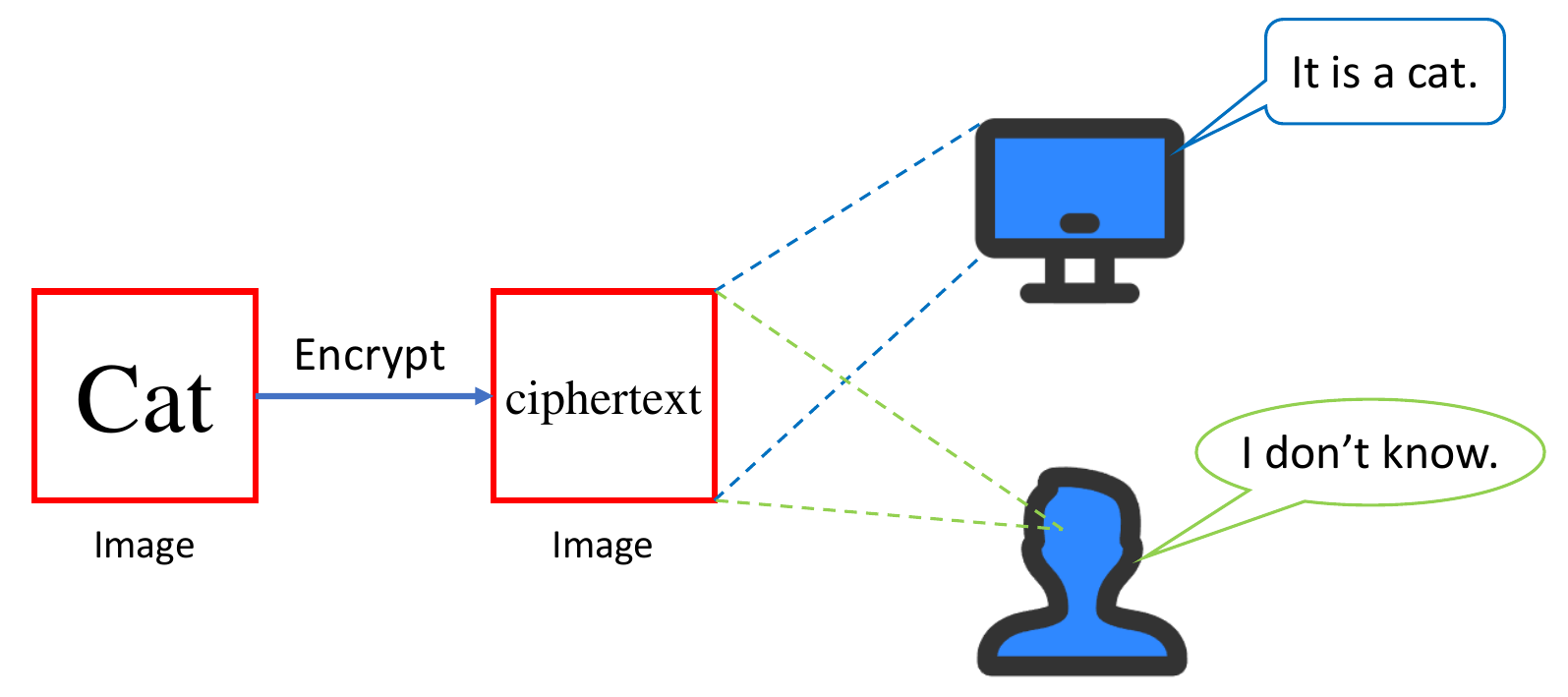}\\
  \caption{Design objectives of the encryption scheme.}\label{Objectives_of_encryption_schemes}
\end{figure}

Based on the aforementioned findings, we propose a technique for encryption called the Random Vortex Transformation. Our approach is to utilize random vortex transformations, where we permute the coordinates of selected points in the image. This permutation makes it difficult for the human eye to identify the target information in the image, while still allowing accurate recognition by a computer. Utilizing this principle to achieve image encryption. The mathematical expression for the Random Vortex Transformation is as follows:
\begin{eqnarray}\label{Vortex}
\left\{
\begin{aligned}
&i' = d \cdot cos (\theta+ (R - d)\cdot f(d)),\\
&j' = d \cdot sin (\theta+ (R - d)\cdot f(d)),
\end{aligned}
\right.
\end{eqnarray}
where $(i', j')$ represents the coordinates of point $P_{ij}$ after undergoing a random vortex transformation, $d = d(P_{ij}, P_{i_0j_0})$, where $P_{i_0j_0}$ denotes the vortex center, and $\theta = \arctan\left(\frac{j - j_0}{i - i_0}\right)$. $R$ denotes the vortex radius, and $d\leq R\leq \min(n - i, i, m - j, j)$. $f(\cdot)$ is a random function with a bounded derivative, i.e. $|f'(\cdot)|<\omega$, also referred to as the vortex coefficient function, where $\omega$ is a constant. Various methods can be employed to generate the random function, such as a polynomial function with random coefficients or a sum of cosine functions with randomly selected parameter values.

Figure~\ref{lena_vortex} illustrates the result of applying random vortex transformations to the Lena image. It is evident that the transformed Lena image is no longer recognizable to the human eye, thereby achieving the objective of encryption and safeguarding the right to privacy of facial portraits. Figure~\ref{MNIST-vorte} demonstrates the vortex effects on a subset of images from the MNIST dataset. Upon visual observation, it can be discerned that although the images after the vortex transformation are distinct from their original counterparts, there still exists a commonality among images with the same class label. This commonality serves as one of the crucial foundations for computer recognition of the target objects.
\begin{figure}[!t]%
  \centering
  \includegraphics[width=0.3\textwidth]{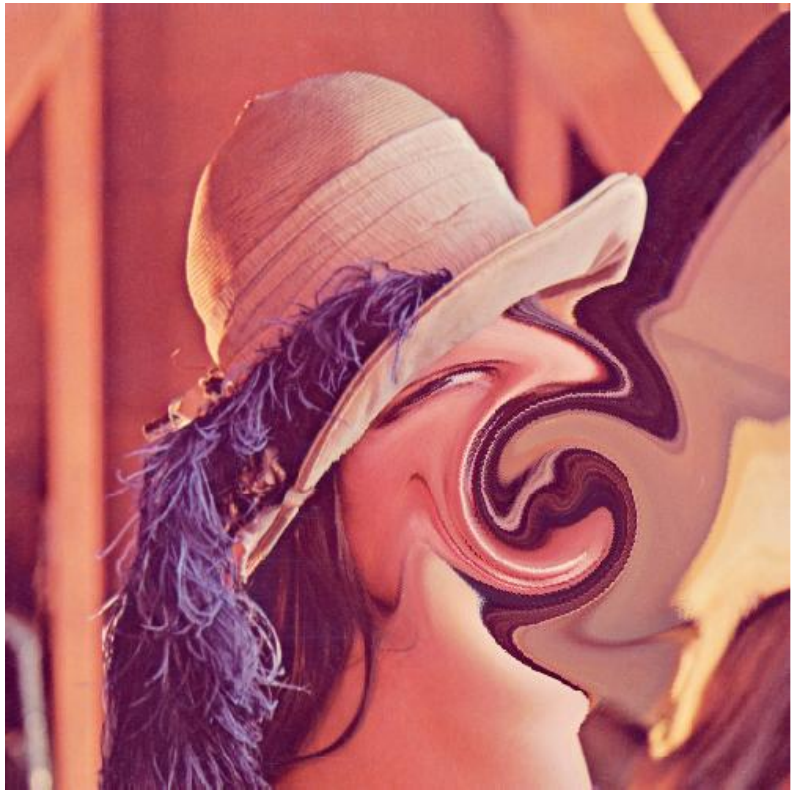}\\
  \caption{Lena after vortex transformation.}\label{lena_vortex}
\end{figure}

\begin{figure}[!t]%
  \centering
  \includegraphics[width=0.4\textwidth]{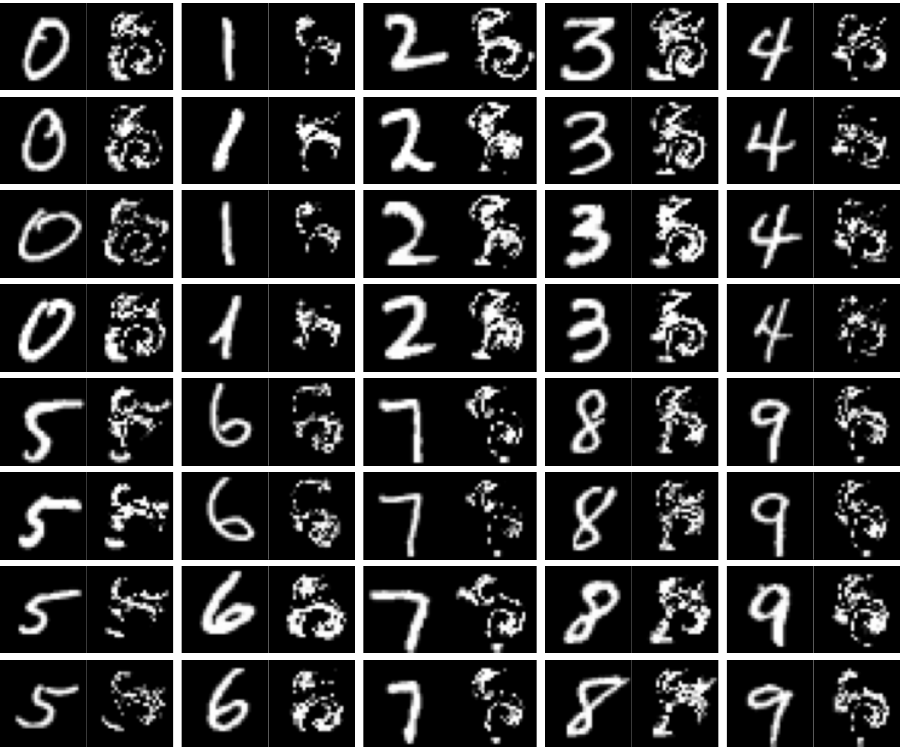}\\
  \caption{In each subplot, the left image represents the original image from the MNIST dataset, while the right image portrays the same image after undergoing vortex transformation.}\label{MNIST-vorte}
\end{figure}

Subsequently, we will elucidate Eq.~\eqref{Vortex} to aid readers in comprehending the vortex transformation.

\begin{itemize}
\item We designate the points ``at an equidistant distance from the vortex center" as ``points on the same circumference". The purpose of vortex transformation is to rotate pixels along the same circumference, thereby distorting the target information in the image. Figure~\ref{Example of permuting pixel positions} demonstrates an example of this process.

\item The random function $f(\cdot)$ plays a crucial role in image encryption as it is used to assign distinct vortex coefficients to each circumference. As a result, the vortex coefficients for each circumference are randomized, significantly enhancing the difficulty of image recovery.

\item We constrain the derivative of the random function $f(\cdot)$ to be bounded to prevent the destruction of neighboring information on different circumferences due to excessively large vortex coefficients. In other words, we aim to avoid a significant reduction in the amount of information in the image caused by the magnitudes of the term $(\tilde{d} (P_{i_1 j_1}, P_{st}) - \tilde{d} (P_{i_2 j_2}, P_{st}))$ in Theorem~\ref{theorem} becoming too large due to the vortex transformation. It is precisely based on this configuration that, after undergoing random vortex transformations, the neighboring information of the points is not excessively disrupted, enabling the computer to still recognize the target objects based on the pixel values and proximity information of the pixels. Figure~\ref{A graph of a randomly generated function} illustrates a graph generated from a randomly generated function.

\item The term $(R - d)$ is introduced to prevent excessively large vortex coefficients at the edges of the vortex.
\end{itemize}
\begin{figure}[!t]%
  \centering
  \includegraphics[width=0.3\textwidth]{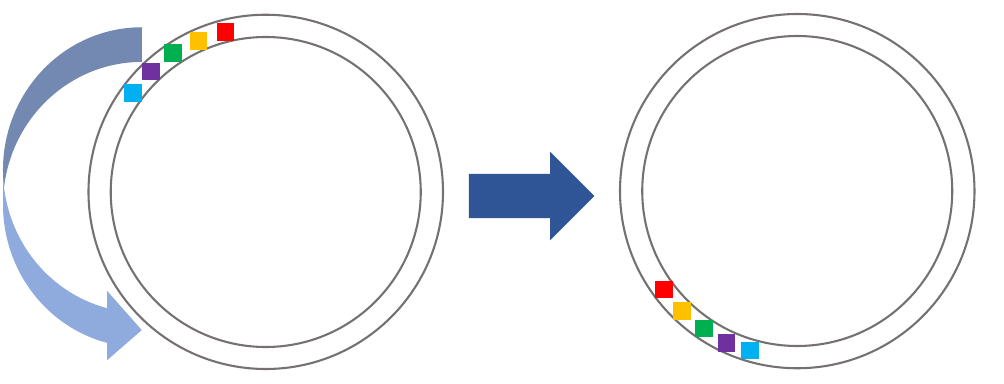}\\
  \caption{Example of permuting pixel positions.}\label{Example of permuting pixel positions}
\end{figure}
\begin{figure}[!t]%
  \centering
  \includegraphics[width=0.4\textwidth]{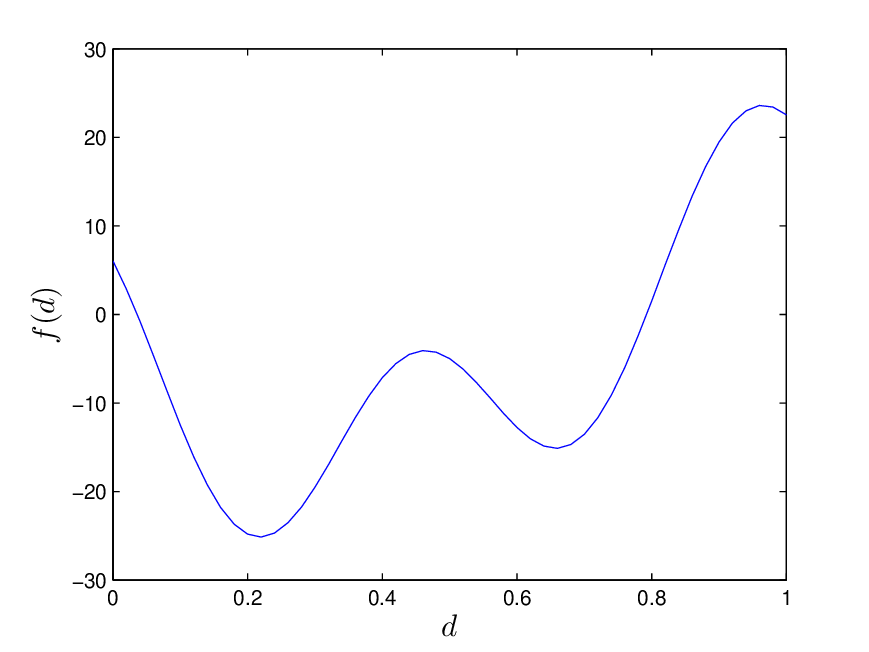}\\
  \caption{A graph of a randomly generated function.}\label{A graph of a randomly generated function}
\end{figure}

Due to the low cost (only resulting in minimal loss of pixel neighborhood information) of random vortex transformations for image encryption, it is possible to apply multiple random vortex transformations on the same image to achieve enhanced encryption. Equation \eqref{Vortex} is abbreviated as follows:
\begin{eqnarray}\label{simple Vortex}
P^* &=& G(P, [(i_0, j_0), R, f(\cdot)])\nonumber\\
&=& G(P,~\cdot~),
\end{eqnarray}
where, $G(P,~\cdot~)$ represents the abstract expression of Eq.~\eqref{Vortex}, and for each vortex transformation, the parameters $[(i_0, j_0), R, f(\cdot)]$ are random. The superposed random vortex transformation can be expressed as:
\begin{eqnarray}\label{superposition Vortex}
P^* = G_{Sup}(~\cdots~ G_2(G_1(P,~\cdot~),~\cdot~)~\cdots~),
\end{eqnarray}
where, $Sup$ denotes the number of superposed random vortex transformations.

\section{Experiments}\label{sec: Experiments}
In section~\ref{sec: Information Content of Images}, the feasibility of the random vortex transformation is theoretically analyzed. In this section, we aim to empirically validate this theory using the machine learning model. We indirectly examine the neighboring information of images by evaluating the model's recognition accuracy on the images. Our experimental design is illustrated in \figurename~\ref{Experimental design}. Specifically, we encrypt the images using the proposed random vortex transformation described in this paper. Subsequently, we directly train a neural network model using the encrypted images and evaluate its accuracy on the test dataset. On the other hand, we train the same neural network using the original data. By comparing the results from both sets of experiments, we can validate the conclusion of whether the ``random vortex transformation disrupts the neighboring information of images".

\begin{figure}[!t]%
  \centering
  \includegraphics[width=0.5\textwidth]{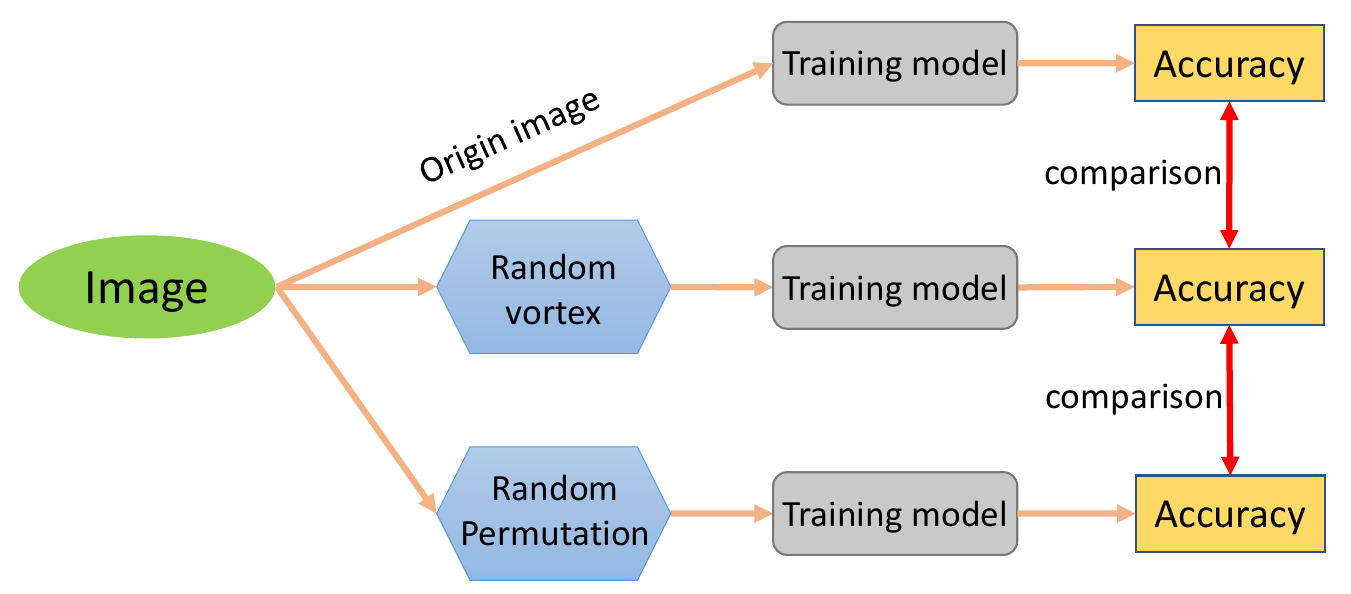}\\
  \caption{The experimental design strategy involves indirectly representing the proportion of preserved information during the encryption process through the accuracy of the model.}\label{Experimental design}
\end{figure}

To examine the impact of different random vortex transformations on the experimental results, we generate multiple sets of random vortices for encrypting the images. We then compare the fluctuations in the final test accuracy. This section is referred to as the sensitivity experiment of random function.

All the models are trained with a mini-batch of 128 on one GPU RTX 3090.

\subsection{Experimental Settings}

In this section, we will introduce the various components depicted in \figurename~\ref{Experimental design}.

\subsubsection{Datasets}

In our experiment, three datasets are used to verify the effectiveness of the vortex transformation, namely MNIST, Fashion and CIFAR-10.

\begin{itemize}
\item The MNIST dataset ~\cite{DBLP:journals/pieee/LeCunBBH98} is a classical dataset for image classification, which contains $10$ categories of $28 \times 28$ handwritten digit images from $1$ to $10$. It has $70000$ samples in total, where $60000$ samples for the training set, and $10000$ samples for the test set.

\item The Fashion dataset~\cite{DBLP:journals/corr/abs-1708-07747} is a MNIST-like dataset with images of fashion objects, which also has $10$ categories, like T-shirt, trouser etc. The training set and the test set are composed of $60000$ and $10000$ samples respectively.

\item The CIFAR-10 dataset ~\cite{krizhevsky2009learning} is a more complex dataset consisting of $60000$ $32 \times 32$ images for $10$ categories, with $6000$ images per category. The training set contains $50000$ images and the test set contains $10000$ images.
\end{itemize}

\subsubsection{Model}

In our experiment, we adopt two powerful neural networks ResNet-18 and Vision Transformer to do image classification.

\begin{itemize}
  \item ResNet-18~\cite{DBLP:conf/cvpr/HeZRS16} is an architecture introducing short-cut connection between layers which enables the network to be built deeper. Based on the architecture of ResNet-18 for ImageNet implemented in~\cite{DBLP:conf/cvpr/HeZRS16}, concerning the small size of the images in the datasets we used, i.e. $28 \times 28$ in MNIST and Fahion and $32 \times 32$ in CIFAR-10, much smaller than the images in CIFAR-10, we do some small adjustments to the original architecture, replacing the first $7 \times 7$ convolution with a $3 \times 3$ convolution and remove the followed maxpooling layer, which means that the size of the feature map will not be changed after the first convolution layer.

  \item Vision Transformer~\cite{DBLP:conf/iclr/DosovitskiyB0WZ21} (VIT) is a transformer-based architecture for image recognition that splits the input images into patches and trains them like a sequence of `words'. In our experiment, we adopt the ViT-Base model in~\cite{DBLP:conf/iclr/DosovitskiyB0WZ21}, with $12$ attention layers.
\end{itemize}

\subsubsection{Compared Methods}

For each dataset, we train models using the original data, the data encrypted using vortex transformations, and the data with randomly permuted pixels. We compare the classification accuracy of these three sets of experiments. The details are described as follows:

\begin{itemize}
\item Origin: We directly train models on three datasets, using them as the baseline for information content of each dataset.

\item Random Vortex Transformation (abbreviated as Vortex): We perform multiple sets of random vortex transformations on each dataset. For example, in the sensitivity experiment of random functions, we conduct three sets of experiments on the MNIST dataset. Each set of experiments involves $4-5$ random vortex transformations. The vortex center, vortex radius, and random function are defined as Eq.~\eqref{eq random function 1} - \eqref{eq random function 3}. It is important to note that the strength of the random function lies in its arbitrary number of parameters, and only a partial list of random functions is provided here.
\begin{figure*}
  \centering
\begin{small}
      \begin{equation}\label{eq random function 1}
      \left\{
      \begin{aligned}
      &(i', j') = (5, 11), R = 4, f(d) = -1.88\sin(1.20d+0.95) +1.17\cos(0.68d+0.74),\\
      &(i', j') = (18, 18), R = 9, f(d) = 1.96d + 0.03d^3 + 1.67e^{2d} +1.79cos(1.26d+1.60), \\
      &(i', j') = (10, 6), R = 5, f(d) = 0.09\sqrt{d} + 0.16d^2 + 1.89e^d - 0.52sin(1.46d+1.15), \\
      &(i', j') = (8, 12), R = 7, f(d)= 1.16d^5 + 0.03\ln(d+1) +1.40 e^d + 0.60cos(0.67d+0.92), \\
      &(i', j') = (19, 15), R = 8, f(d) = 1.64\cdot2^d + 3.48d +0.01\lg(d+1) + 1.65sin(1.93d+0.88).\\
      \end{aligned}
      \right.
      \end{equation}
    \end{small}
\end{figure*}
\begin{figure*}
  \centering
\begin{small}
      \begin{equation}\label{eq random function 2}
      \left\{
      \begin{aligned}
      &(i', j') = (12, 26), R = 1, f(d) = 0.50d + 1.49d^3 + 0.99e^{2d} + 0.60cos(0.22d+0.08),\\
      &(i', j') = (19, 21), R = 6, f(d) = 1.70\sqrt{d} + 1.49d^2 + 0.32e^d - 1.50sin(0.37d+1.51), \\
      &(i', j') = (8, 10), R = 7, f(d)=  1.61d^5 + 1.50\ln(d+1) + 0.77e^d + 1.05cos(1.30d-0.34), \\
      &(i', j') = (12, 11), R = 10, f(d) = 0.17\cdot2^d + 2.86d + 0.49\lg(d+1) - 0.49sin(1.24d+0.86). \\
      \end{aligned}
      \right.
      \end{equation}
    \end{small}
\end{figure*}
\begin{figure*}
  \centering
\begin{small}
      \begin{equation}\label{eq random function 3}
        \left\{
        \begin{aligned}
        &(i', j') = (23, 22), R = 4, f(d) = -1.56\sin(0.58d+ 0.81) + 0.83\cos(0.06d+0.84),\\
        &(i', j') = (8, 12), R = 7, f(d) = 1.79d+ 1.71d^3 + 0.66e^{2d} + 0.72cos(1.26d+1.67),\\
        &(i', j') = (14, 10), R = 9, f(d) = 1.20\sqrt{d} + 0.49d^2 +1.41e^d + 0.72sin(0.23d+1.58), \\
        &(i', j') = (15, 26), R = 1, f(d)= 1.74d^5 + 1.87\ln(d+1) + 1.84 e^d + 0.75cos(1.17d-0.72), \\
        &(i', j') = (20, 19), R = 7, f(d) = 0.04\cdot2^d + 0.86d + 0.13\lg(d+1) + 0.52sin(1.49d+0.56). \\
        \end{aligned}
        \right.
      \end{equation}
    \end{small}
\end{figure*}

\item Random Permutation (abbreviated as Random): As comparative experiment, we randomly permute the pixels within each image. This approach preserves the pixel value information while disrupting the neighboring information between pixels. By contrasting random permutation with random vortex transformations, we can test the ability of random vortex transformations to preserve neighboring information.
\end{itemize}

%
%
%

\subsection{Results and Analysis}
\subsubsection{Comparison of Results after Vortex Encryption}
\begin{figure*}[!t]%
  \centering
  \includegraphics[width=0.9\textwidth]{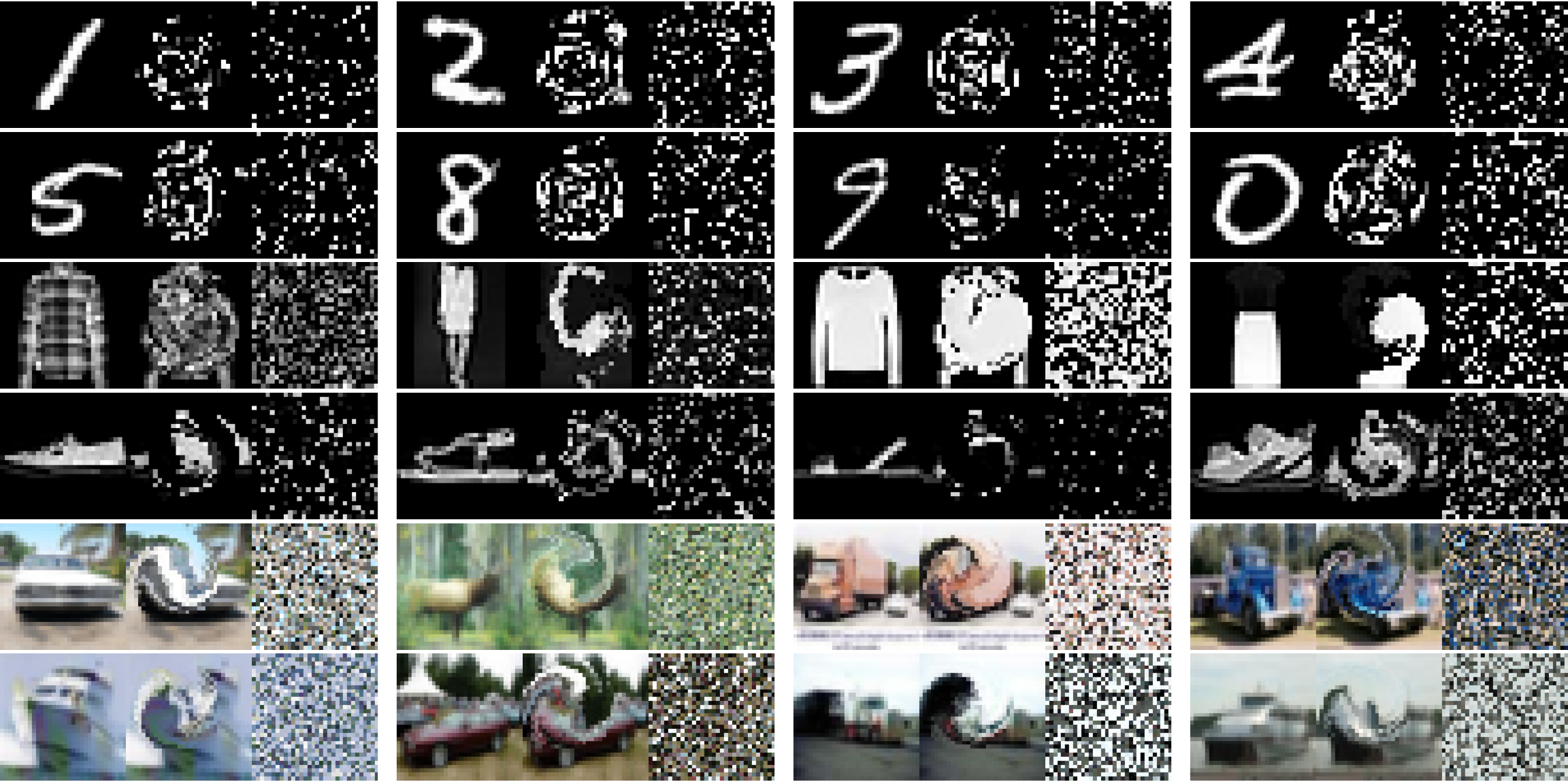}\\
  \caption{The image presentation of three datasets is shown. The first and second rows are from MNIST, the third and fourth rows are from Fashion, and the fifth and sixth rows are from CIFAR-10. Each subplot consists of three images: the original image, the image after vortex transformation, and the image after random pixel permutation.}\label{Three_dataset_image_presentation}
\end{figure*}

\begin{table*}[!t]
  \begin{centering}
    \caption{Comparison result of Origin, Vortex and Random.}\label{tab three dataset}
    \begin{tabular}{@{}c@{}|@{}c@{}c@{}c@{}|@{}c@{}c@{}c@{}|@{}c@{}c@{}c@{}}
      \hline
      \multirow{4}{*}{Model}     & \multicolumn{3}{c|}{\multirow{2}{*}{MNIST}}                                                                                              & \multicolumn{3}{c|}{\multirow{2}{*}{Fashion}}                                                                                 & \multicolumn{3}{c}{\multirow{2}{*}{CIFAR-10}}                                                                                           \\
                                & \multicolumn{3}{c|}{}                                                                                                                    & \multicolumn{3}{c|}{}                                                                                                         & \multicolumn{3}{c}{}                                                                                                                    \\ \cline{2-10}
                                & \multicolumn{1}{c}{\multirow{2}{*}{Origin}} & \multicolumn{1}{c}{\multirow{2}{*}{Vortex}} & \multicolumn{1}{c|}{\multirow{2}{*}{Random}} & \multicolumn{1}{c}{\multirow{2}{*}{Origin}} & \multicolumn{1}{c}{\multirow{2}{*}{Vortex}} & \multirow{2}{*}{Random}           & \multicolumn{1}{c}{\multirow{2}{*}{Origin}} & \multicolumn{1}{c}{\multirow{2}{*}{Vortex}} & \multicolumn{1}{c}{\multirow{2}{*}{Random}} \\
                                & \multicolumn{1}{c}{}                        & \multicolumn{1}{c}{}                        & \multicolumn{1}{c|}{}                        & \multicolumn{1}{c}{}                        & \multicolumn{1}{c}{}                        &                                   & \multicolumn{1}{c}{}                        & \multicolumn{1}{c}{}                        & \multicolumn{1}{c}{}                        \\ \hline
      \multirow{2}{*}{ResNet-18} & \multirow{2}{*}{99.20\%}                    & \multirow{2}{*}{$98.89\%^{\green{\downarrow 0.31\%}}$}            & \multirow{2}{*}{$28.99\%^{\green{\downarrow 70.21\%}}$}            & \multirow{2}{*}{94.12\%}                    & \multirow{2}{*}{$92.85\%^{\green{\downarrow 1.27\%}}$}   & \multirow{2}{*}{$40.10\%^{\green{\downarrow 54.02\%}}$}    & \multirow{2}{*}{92.80\%}                    & \multirow{2}{*}{$87.56\%^{\green{\downarrow 5.24\%}}$}   & \multirow{2}{*}{$42.80\%^{\green{\downarrow 50.0\%}}$}             \\
                                &                                             &                                             &                                              &                                             &                                             &                                   &                                             &                                             &                                             \\
      \multirow{2}{*}{ViT}       & \multirow{2}{*}{99.47\%}                    & \multirow{2}{*}{$99.15\%^{\green{\downarrow 0.32\%}}$}            & \multirow{2}{*}{$31.84\%^{\green{\downarrow 67.63\%}}$}              & \multirow{2}{*}{94.95\%}                    & \multirow{2}{*}{$92.61\%^{\green{\downarrow 2.34\%}}$}     & \multirow{2}{*}{$56.19\%^{\green{\downarrow 38.76\%}}$} & \multirow{2}{*}{98.50\%}                    & \multirow{2}{*}{$92.04\%^{\green{\downarrow 6.46\%}}$}   & \multirow{2}{*}{$50.07\%^{\green{\downarrow 48.43\%}}$}             \\
                                &                                             &                                             &                                              &                                             &                                             &                                   &                                             &                                             &                                             \\ \hline
 $\Upsilon(\cdot)$  & ~100.00\%~                                    & ~{$95.25\%^{\green{\downarrow 4.75\%}}$}~                                      & ~{$67.34\%^{\green{\downarrow 32.66\%}}$}~                                       & ~100.00\%~                                    & ~{$90.04\%^{\green{\downarrow 9.96\%}}$}~                                     & \multicolumn{1}{l|}{$67.23\%^{\green{\downarrow 32.77\%}}$}   &~ 100.00\%~                                    & ~{$97.38\%^{\green{\downarrow 2.62\%}}$}~                                     & ~{$67.08\%^{\green{\downarrow 32.92\%}}$}~                                     \\ \hline
    \end{tabular}
    \end{centering}
\end{table*}

Figure~\ref{Three_dataset_image_presentation} showcases the visual effects of images from three datasets after undergoing random vortex transformations. Corresponding test accuracies are presented in Table~\ref{tab three dataset}. From the \figurename~\ref{Three_dataset_image_presentation} and Table~\ref{tab three dataset}, we can draw the following conclusions:

\begin{itemize}
    \item The accuracy of MNIST-Vortex and Fashion-Vortex is only $0\%$ and $2.5\%$ lower compared with that of MNIST-Origin and Fashion-Origin. This indicates that random vortex transformations can render images indistinguishable to the human eye while maintaining recognizability by a computer.

    \item CIFAR-10-Vortex exhibits a decrease of approximately $6\%$ compared to CIFAR-10-Origin. This is attributed to CIFAR-10 being a three-channel dataset, resulting in a threefold increase in the disruption of neighboring pixel information. Nevertheless, CIFAR-10-Vortex maintains a high accuracy ranging from $87.56\%$ to $92.04\%$.

    \item Under random pixel permutation, the accuracy is significantly low. This is because random permutation completely disrupts the neighboring relationships between pixels, making the images difficult to recognize. On the other hand, this also signifies that the advantageous experimental results achieved by random vortex transformations are not solely attributed to the model's strong performance, but also due to the ability of random vortex transformations to preserve a significant amount of neighboring information in the images.
\end{itemize}

Figure~\ref{Testing_error_curve} illustrates the test error variation curves for six experimental groups. The datasets after vortex transformations exhibit slower convergence; however, they still manage to converge and achieve satisfactory outcomes.

\begin{figure*}[!t]%
  \centering
  \includegraphics[width=1.0\textwidth]{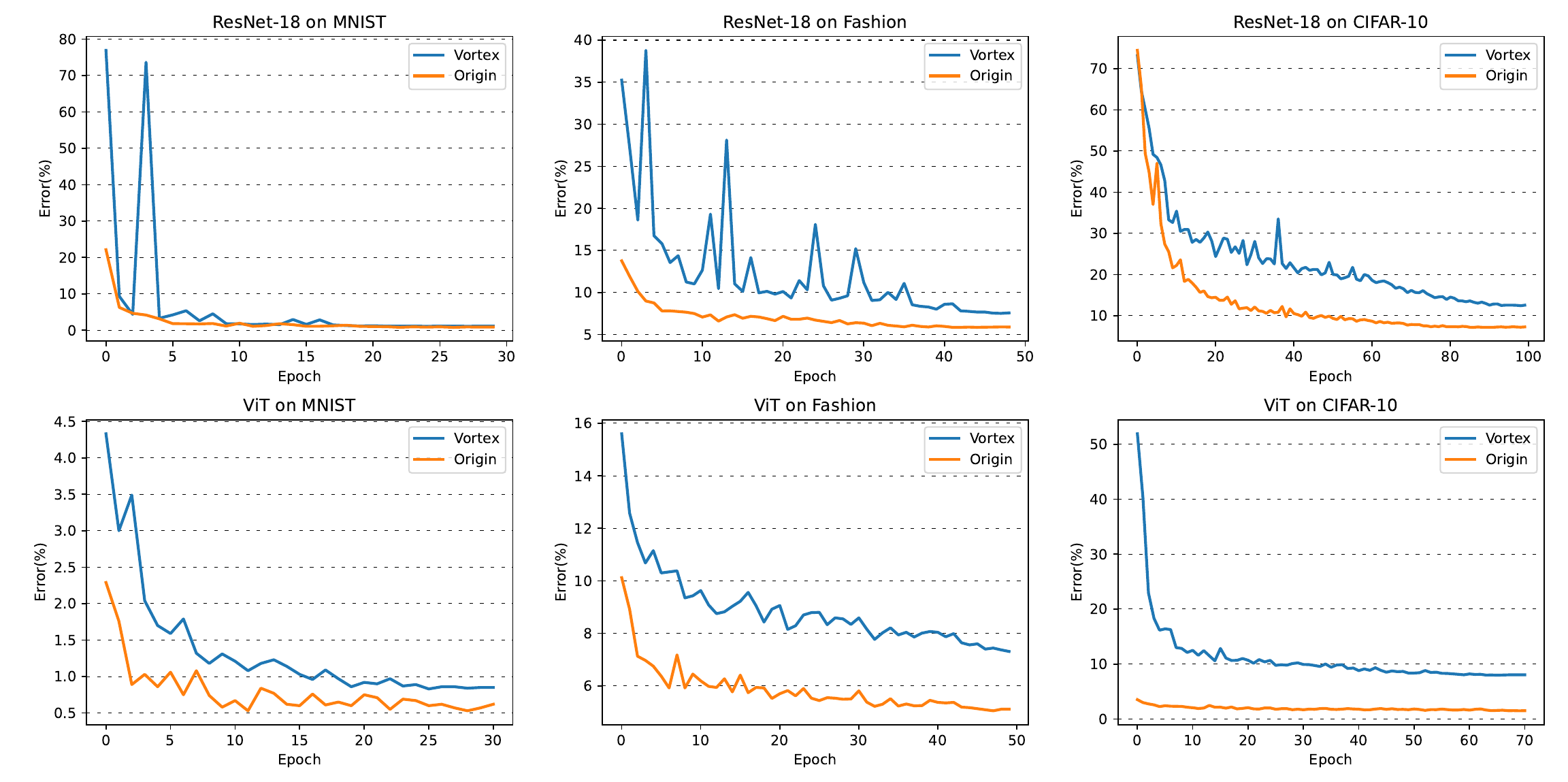}\\
  \caption{The curve of testing error variation.}\label{Testing_error_curve}
\end{figure*}

\subsubsection{Variation in Information Content}
Since this paper primarily investigates the neighboring information of pixels, and the image transformations employed in the experiments do not alter the pixel values, the information content discussed here disregards the pixel values. In other words, Eq.~\eqref{pij and pix neig}, which involves the function $M_{pix} (P_{ij})$, is not taken into consideration in this context.

The last row of Table~\ref{tab three dataset} displays the remaining information content $\Upsilon(\cdot)$ after the image undergoes transformations. Comparing random vortex transformations and random permutations, the decrease in accuracy is roughly proportional to the loss rate of information content. This suggests that the random vortex transformations, designed based on information content, are reasonable. However, it is important to note that accuracy is only an indirect reflection of information content, and thus, the relationship between the two is not strictly linear.

Figure~\ref{fig_Information_content_on_dataset} illustrates the variation in information content after multiple random vortex transformations and random permutations for three datasets. After random permutations, the relative positions of pixels are severely disrupted, leading to a decay of information content in the image to approximately 68\%. Conversely, due to the ability of random vortex transformations to preserve the neighboring relationships within the image, the decay rate of information content is significantly slower.

\begin{figure*}[!t]%
  \centering
  \includegraphics[width=1.0\textwidth]{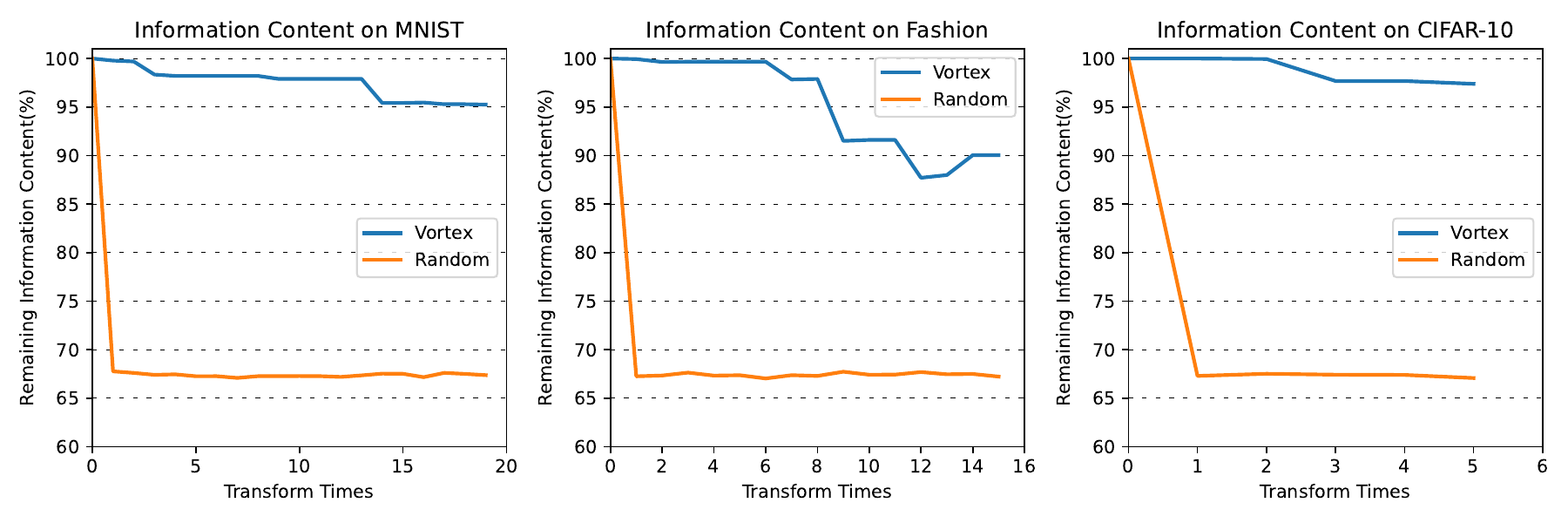}\\
  \caption{The curve represents the variation in information content after the image undergoes transformations. The horizontal axis represents the number of transformations (random vortex transformations or random permutations), while the vertical axis represents the proportion of information content after transformation compared to the original image.}\label{fig_Information_content_on_dataset}
\end{figure*}

\subsubsection{Sensitive Experiment}
In Table~\ref{tab Sensitive Experiment mnist}, we compare the experimental results of three sets with different random parameters to verify the stability of the random vortex transformations. From the experimental results, it can be observed that despite using different random functions and other parameters for encryption in each experiment, they all achieve accuracy close to the ``Origin". This indicates that the encryption scheme can steadily maintain the neighboring relationships between pixels.
\begin{table}[]
\centering
  \caption{Comparison result of three versions of vortex transformation on MNIST.}\label{tab Sensitive Experiment mnist}
  \begin{tabular}{c|llll}
  \hline
  \multirow{2}{*}{Model}     &\multicolumn{1}{c}{\multirow{2}{*}{Origin}} & \multicolumn{1}{c}{\multirow{2}{*}{Vortex-1}} & \multicolumn{1}{c}{\multirow{2}{*}{Vortex-2}} & \multicolumn{1}{c}{\multirow{2}{*}{Vortex-3}} \\
                             & \multicolumn{1}{c}{}         & \multicolumn{1}{c}{}                         & \multicolumn{1}{c}{}                         & \multicolumn{1}{c}{}                         \\ \hline
  \multirow{2}{*}{ResNet-18} & \multirow{2}{*}{99.20\%} & \multirow{2}{*}{98.77\%}             & \multirow{2}{*}{98.67\%}             & \multirow{2}{*}{98.75\%}             \\
                             & &                                              &                                              &                                              \\
  \multirow{2}{*}{ViT}       & \multirow{2}{*}{99.47\%} & \multirow{2}{*}{98.76\%}             & \multirow{2}{*}{98.49\%}             & \multirow{2}{*}{98.62\%}             \\
                             & &                                              &                                              &                                              \\ \hline
  \end{tabular}
  \end{table}



\section{Conclusion}\label{sec: Conclusion}
In this paper, we propose a theoretical framework for measuring the information content of images, inspired by convolutional neural networks and Vision Transformers. This framework considers the entire set of pixels in an image as a whole, taking into account the relative positional relationships among the pixels. Instead of considering the absolute position of individual pixels or the correlation coefficients between each pixel and its surrounding eight pixels in isolation. This measurement approach plays a significant role in various areas such as image quality assessment, image feature extraction, and evaluation of image encryption algorithms. Furthermore, we explore image encryption algorithms based on this framework. Common image encryption schemes aim to disrupt the neighboring relationships between pixels as much as possible. However, we take a completely opposite approach by designing a scheme that encrypts images while preserving the neighboring information as much as possible --- the Random Vortex Transformation. One advantage of this approach is that the encrypted images can be used for training machine learning models without the need for decryption. Objectively speaking, this method is not a perfect encryption scheme. Our goal is not to propose the highest level of security encryption scheme but rather to combine data encryption with machine learning. For instance, combining the Random Vortex Transformation with federated learning~\cite{DBLP:conf/aistats/McMahanMRHA17} can better prevent attackers from recovering the original image using gradient attacks and other methods~\cite{DBLP:conf/nips/ZhuLH19, DBLP:journals/pami/GaoZGZXQWL23, DBLP:journals/tifs/YangGXL23}. This is also one of our future directions of application.

\ifCLASSOPTIONcaptionsoff
  \newpage
\fi

\appendices

\bibliographystyle{IEEEtran}

\bibliography{Vortexref}

\vskip -0.3in
\begin{IEEEbiography}
	[{\includegraphics[width=1in,height=1.25in,clip,keepaspectratio]{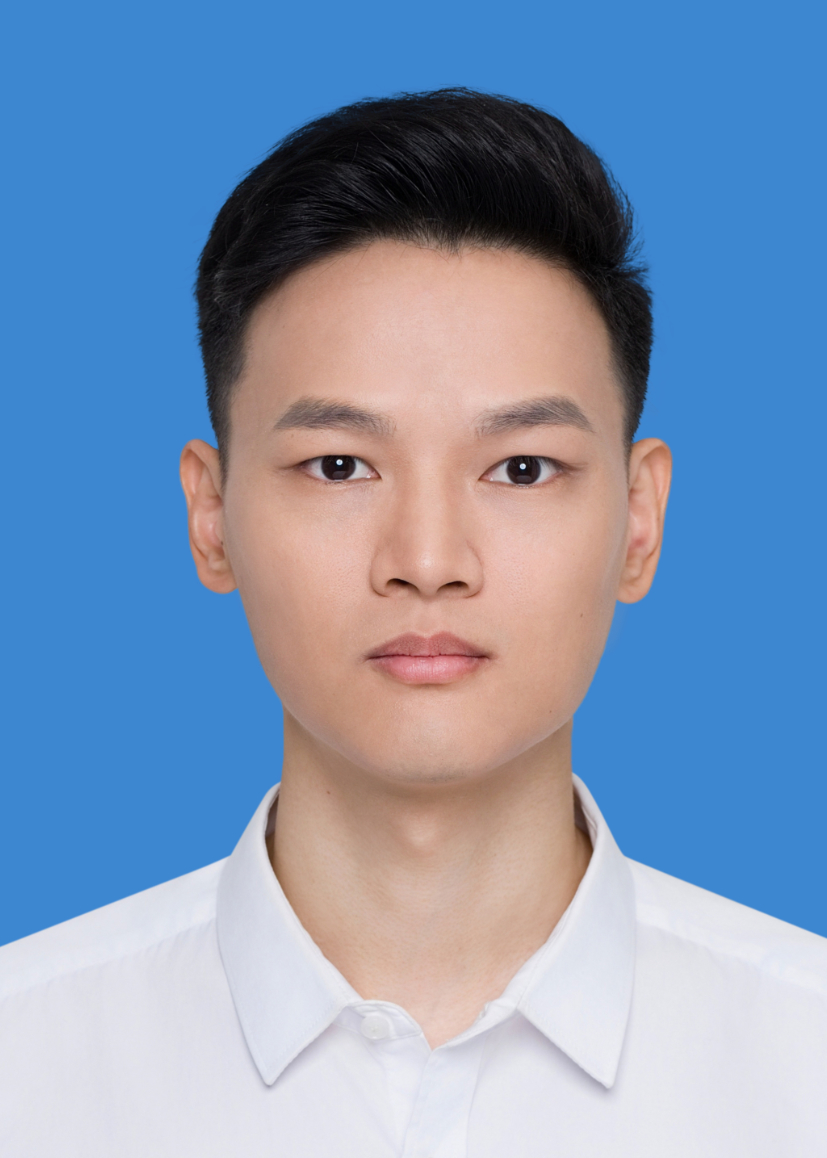}}]
	{Xiao-Kai Cao} received his Master degree in mathematics in 2020 from Guizhou University. He is purchasing his Ph.D. degree in computer science and technology at Sun Yat-sen University. His research interests are privacy computing and information security. He has published several academic papers in journals such as IEEE TCYB.
\end{IEEEbiography}
\begin{IEEEbiography}
    [{\includegraphics[width=1in,height=1.25in,clip,keepaspectratio]{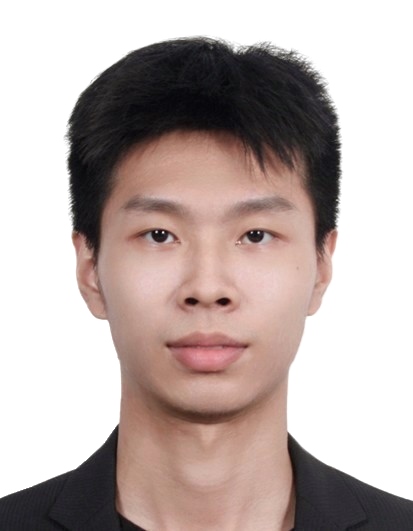}}]{Wen-Jin Mo} is purchasing his bachelor degree in computer science and technology at Sun Yat-sen University. His research interests are computer vison and privacy computing.
\end{IEEEbiography}
\begin{IEEEbiography}
	[{\includegraphics[width=1in,height=1.25in,clip,keepaspectratio]{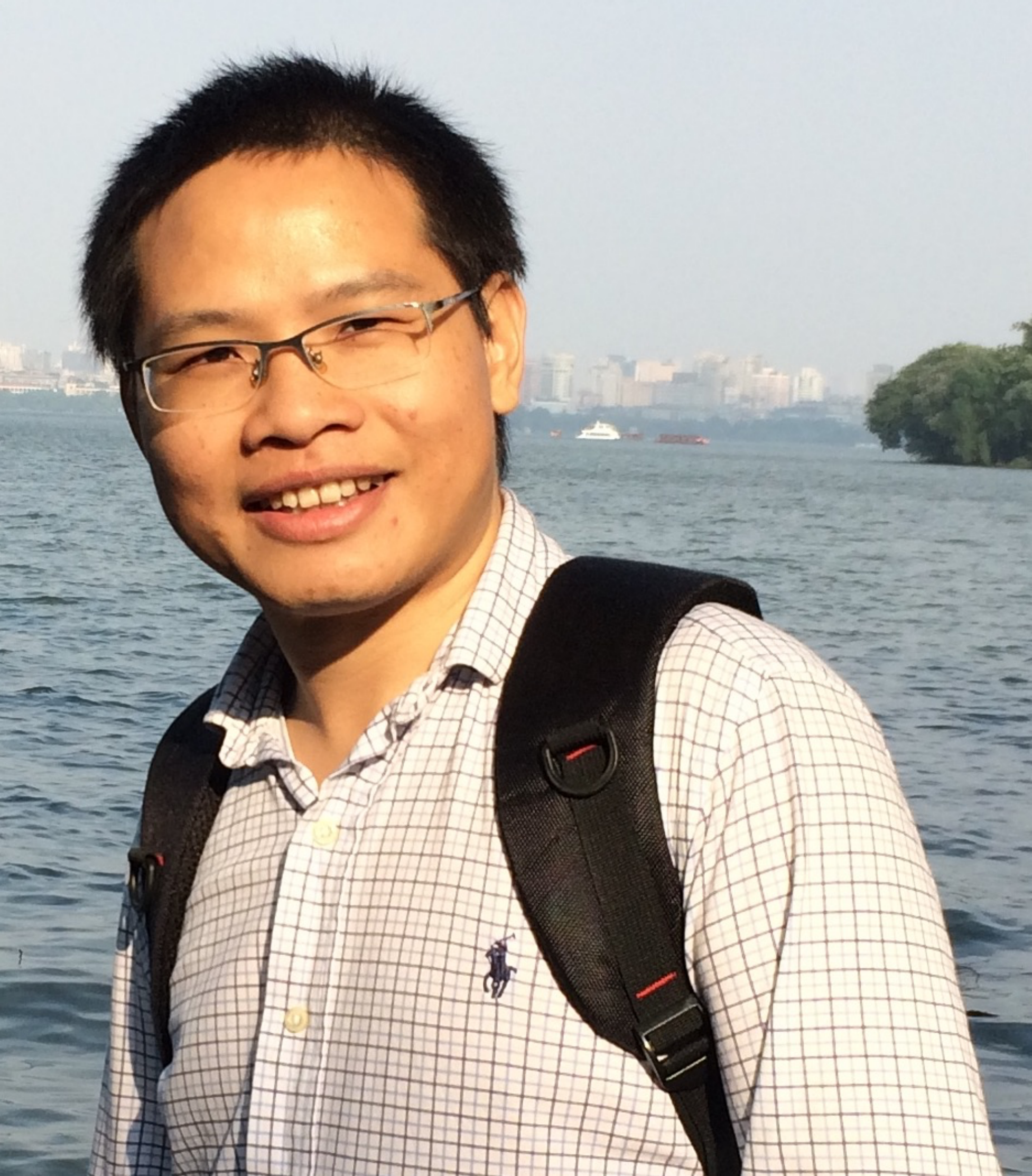}}]{Chang-Dong Wang} received the Ph.D. degree in computer science in 2013 from Sun Yat-sen University, Guangzhou, China. He was a visiting student at University of Illinois at Chicago from Jan. 2012 to Nov. 2012. He joined Sun Yat-sen University in 2013, where he is currently an associate professor with School of Computer Science and Engineering. His current research interests include machine learning and data mining. He has published over 80 scientific papers in international journals and conferences such as IEEE TPAMI, IEEE TKDE, IEEE TCYB, IEEE TNNLS, ACM TKDD, ACM TIST, IEEE TSMC-Systems, IEEE TII, IEEE TSMC-C, KDD, AAAI, IJCAI, CVPR, ICDM, CIKM and SDM. His ICDM 2010 paper won the Honorable Mention for Best Research Paper Awards. He won 2012 Microsoft Research Fellowship Nomination Award. He was awarded 2015 Chinese Association for Artificial Intelligence (CAAI) Outstanding Dissertation. He was an Associate Editor in Journal of Artificial Intelligence Research (JAIR).
\end{IEEEbiography}

\begin{IEEEbiography}
	[{\includegraphics[width=1in,height=1.25in,clip,keepaspectratio]{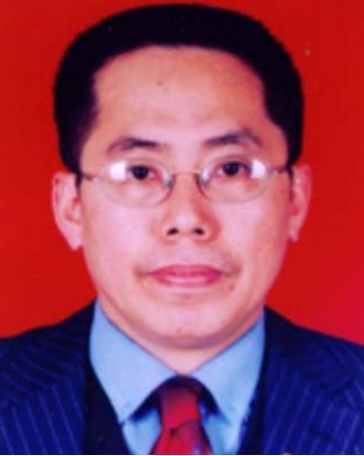}}]{Jian-Huang Lai} (Senior Member, IEEE) received the M.Sc. degree in applied mathematics and the Ph.D. degree in mathematics from Sun Yat-sen University, Guangzhou, China, in 1989 and 1999, respectively.

He joined Sun Yat-sen University, in 1989, as an Assistant Professor, where he is currently a Professor with the School of Data and Computer Science. He has authored or coauthored more than 200 scientific papers in the international journals and conferences on image processing and pattern recognition, such as IEEE TRANSACTIONS ON PATTERN ANALYSIS AND MACHINE INTELLIGENCE, IEEE TRANSACTIONS ON KNOWLEDGE AND DATA ENGINEERING, IEEE TRANSACTIONS ON NEURAL NETWORKS, IEEE TRANSACTIONS ON IMAGE PROCESSING, IEEE TRANSACTIONS ON SYSTEMS, MAN, AND CYBERNETICS-PART B: CYBERNETICS, Pattern Recognition, International Conference on Computer Vision, Conference on Computer Vision and Pattern Recognition, International Joint Conference on Artificial Intelligence, IEEE International Conference on Data Mining, and SIAM International Conference on Data Mining. His research interests include digital image processing, pattern recognition, and multimedia communication, wavelet and its applications. Dr. Lai is as a Standing Member of the Image and Graphics Association of China and also a Standing Director of the Image and Graphics Association of Guangdong.
\end{IEEEbiography}
\begin{IEEEbiography}
	[{\includegraphics[width=1in,height=1.25in,clip,keepaspectratio]{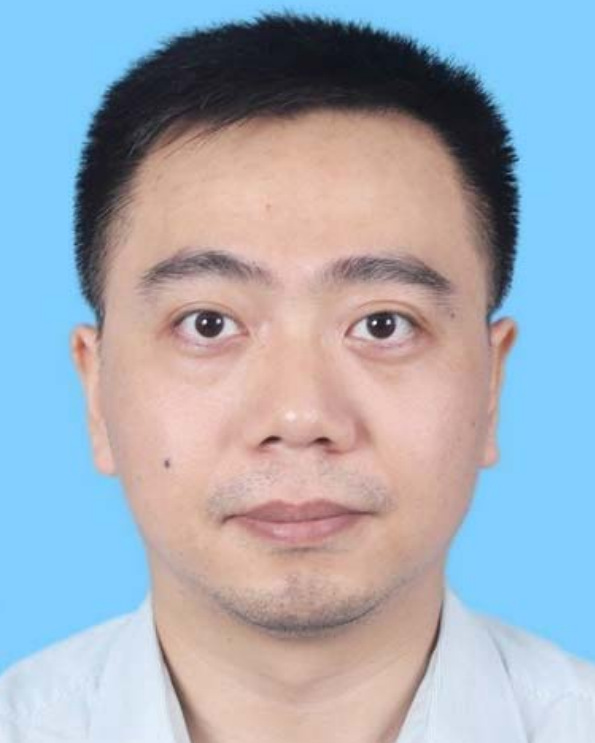}}]{Qiong Huang} (Member, IEEE) received the Ph.D. degree from the City University of Hong Kong in 2010. He is currently a Professor with the College of Mathematics and Informatics, South China Agricultural University, Guangzhou, China. He has authored or coauthored more than 120 research papers in international conferences and journals. His research interests include cryptography and information security, in particular, cryptographic protocols design and analysis. He was a Program Committee Member in many international conferences.
\end{IEEEbiography}
\end{document}